\title{Good quantum codes with addressable and parallelizable transversal non-Clifford gates}
\author{Virgile Gu\'emard$^{1,2}$}
\date{
	$^1$Aix Marseille Université, I2M, UMR 7373, 13453 Marseille, France\\%
        $^2$Inria Paris, France \\[2ex]%
	\today
}
\theoremstyle{plain}
\newtheorem{theorem}{Theorem}[section]
\newtheorem{corollary}[theorem]{Corollary}
\newtheorem{lemma}[theorem]{Lemma}
\newtheorem{proposition}[theorem]{Proposition}
\theoremstyle{definition}
\newtheorem{definition}[theorem]{Definition}
\newtheorem{remark}[theorem]{Remark}
\renewcommand{\fnum@figure}{FIG. \thefigure}
\NewDocumentCommand{\evalat}{sO{\big}mm}{%
  \IfBooleanTF{#1}
   {\mleft. #3 \mright|_{#4}}
   {#3#2|_{#4}}%
}
\renewcommand{\thesubsection}{\thesection.\arabic{subsection}}
\renewcommand{\thesection}{\arabic{section}}
\renewcommand{\thesubsubsection}{\thesubsection.\arabic{subsubsection}} 
\newtheorem*{rep@theorem}{\rep@title}
\newcommand{\newreptheorem}[2]{%
\newenvironment{rep#1}[1]{%
 \def\rep@title{#2 \ref{##1}}%
 \begin{rep@theorem}}%
 {\end{rep@theorem}}}
\begin{document}
	\maketitle
\begin{abstract}
In this work, we prove that for any $m>1$, there exists a family of good qudit quantum codes supporting transversal logical $\mathsf{C}^{m-1}\mathsf{Z}$ gates that can address specified logical qudits and be largely executed in parallel. Building on the family of good quantum error-correcting codes presented in \cite{he2025good}, which support addressable and transversal logical $\mathsf{CCZ}$ gates, we extend their framework and show how to perform large sets of gates in parallel. The construction relies on the classical algebraic geometry codes of Stichtenoth \cite{Stichtenoth2006}. Our results lead to a substantial reduction in the depth overhead of multi-control-$Z$ circuits. In particular, we show that the minimal depth of any logical $\mathsf{C}^{m-1}\mathsf{Z}$ circuit involving qudits from $m$ distinct code blocks is upper bounded by $O(k^{m-1})$, where $k$ is the code dimension. While this overhead is optimal for dense $\mathsf{C}^{m-1}\mathsf{Z}$ circuits, for sparse circuits we discuss how the depth overhead can be significantly reduced by exploiting the structure of the quantum code.
\end{abstract}

\tableofcontents

\section{Introduction}

Quantum computers may allow us to efficiently solve problems for which no efficient classical algorithms are known~\cite{shor97algo,Harrow2009,Grover1996}. However, performing such computations on real devices remains extremely challenging due to the effect of noise, which corrupts information along the way. Fortunately, quantum information can be protected by quantum error-correcting codes~\cite{Calderbank1996,Shor1995,Steane1996}, and computation can be carried out using fault-tolerant schemes~\cite{Aharonov97,Gottesman2013,8555154}. The goal of fault-tolerance is to perform computations on the logical information while keeping it encoded. In particular, it has been shown that, given any quantum circuit operating with perfect qubits and gates, there exists another circuit, called a fault-tolerant circuit, that performs the same computation with high probability, provided that the noise level remains below a constant threshold. These results, known as threshold theorems, come with a resource overheads and an increase in the length of the computation.\par

Central to fault-tolerant quantum computation are \emph{logical gates} that can be implemented by low-depth physical circuits. Among them, a gate is said to act transversally on a given quantum code if it can be performed on all the logical qubits at once by acting on all the physical qubits with few-qubit gates. Transversal gates are inherently fault-tolerant, since the low connectivity of the physical circuit prevents errors from propagating. Although no quantum code can support a universal set of transversal gates~\cite{PhysRevLett.102.110502}, considerable effort has been put to finding codes that admit transversal gates up to the third level of the Clifford hierarchy~\cite{PhysRevA.95.012329}, such as $\mathsf{CCZ}$ and $T$ gates. Such constructions directly reduce the overhead of fault-tolerant schemes when combined with other techniques. \par

Among quantum codes, low-density parity-check (LDPC) codes, defined by pairs of sparse parity-check matrices, are leading candidates for physical implementations. The search for transversal gates on LDPC codes has been extensively studied, starting with Clifford gates on color codes~\cite{PhysRevLett.97.180501} and followed by non-Clifford gates~\cite{Bombin_2015,PhysRevA.91.032330,PhysRevLett.111.090505,Kubica_2015,PhysRevA.100.012312}. Recent efforts also show attempts to adapt asymptotically good quantum LDPC constructions~\cite{Panteleev2020,Panteleev2021,Leverrier2022,leverrier2022decoding,dinur2023} to devise codes supporting transversal non-Clifford gates~\cite{zhu2024nonclifford,scruby2024quantumrainbowcodes,zhu2025topological,breuckmann2024cupsgatesicohomology,golowich2024quantumldpccodestransversal,lin2024transversalnoncliffordgatesquantum}.\par

Nevertheless, it is crucial to explore the full landscape of quantum codes, including those that are not LDPC, to understand which structural properties can be leveraged to efficiently perform computations at the logical level. In this direction, non-LDPC quantum codes have been employed in \emph{magic state distillation} (MSD) protocols~\cite{PhysRevA.71.022316}, which, when combined with gate teleportation, enable universal fault-tolerant computation. The overhead of MSD has been greatly reduced since its original proposal. It was shown in~\cite{PhysRevX.2.041021,PhysRevA.86.052329,Krishna2019} that quantum codes built by puncturing triorthogonal matrices, which may be realized as generator matrices of classical Reed-Muller and Reed-Solomon codes, yield quantum codes over prime-dimensional qudits suitable for MSD protocols. More recently, these techniques were generalized using algebraic geometry (AG) codes, producing asymptotically good binary quantum codes supporting non-Clifford transversal gates~\cite{Wills2025,golowich2024asymptoticallygoodquantumcodes,nguyen2024goodbinaryquantumcodes}.\par

Furthermore, by relaxing the LDPC constraint, a recent line of work has investigated quantum codes that inherently support transversal non-Clifford logical gates which are also \emph{addressable}. This means that, given any subset of logical qudits, there exists a constant-depth physical circuit whose induced logical gate acts only on this subset. This direction was initiated in~\cite{he2025addressable}, where the authors generalized the triorthogonal matrix approach to higher orders of orthogonality, which the authors called \emph{addressable orthogonality}. This framework allows for the construction of codes supporting various types of addressable and transversal diagonal gates. In particular, using Reed-Solomon codes, they designed an explicit family of asymptotically good qudit codes supporting addressable $\mathsf{CCZ}$ gates. These qudit codes can then be converted into qubit codes at the cost of a polylogarithmic loss in the parameters. In \cite[Section 3.3]{he2025addressable} the authors also discuss conditions in which logical gates may be parallelized for quantum codes built from Reed-Solomon generating matrices.\par

Soon after, in~\cite{he2025good}, the same authors extended their previous asymptotic construction by using classical AG codes~\cite{Stichtenoth2006} together with the puncturing framework of~\cite{PhysRevX.2.041021,Krishna2019}. They further observed that a particular property of Stichtenoth’s AG codes \cite{Stichtenoth2006}, namely the existence of a transitive group action on the code coordinates, enables the implementation of $\mathsf{CCZ}$ gates acting on arbitrary triples of logical qubits. In this way, the authors of~\cite{he2025good} demonstrated the existence of a family of asymptotically good quantum codes supporting fully addressable $\mathsf{CCZ}$ gates, a major advance in the direction of fault-tolerant quantum computing.\\

In this article, we build on the work of~\cite{he2025addressable}, and in particular on the asymptotically good construction~\cite{he2025good}, and we show the existence of very large sets of addressable non-Clifford gates which can be parallelized.
To state our result precisely, let us introduce its context of application. Let $m$ be a positive integer and $q$ be a power of a prime $p$. The action of a \textit{physical $(m-1)$-control-$Z$ gate} on $m$ physical qudits in the computational basis is given as follows:
\[
\mathsf{C}^{m-1}\mathsf{Z}\ket{x_0}\ket{x_1}\dots\ket{x_{m-1}}
\coloneqq
\exp\left(\frac{2i\pi}{p} \, \mathrm{tr}(x_0x_1\dots x_{m-1})\right)
\ket{x_0}\ket{x_1}\dots\ket{x_{m-1}},
\]
where $\mathrm{tr}\coloneqq\mathrm{tr}_{\mathbb F_q/\mathbb F_p}$ is the trace map from $\mathbb F_q$ to the prime field $\mathbb F_p$ and for all $i$, $x_i\in \mathbb F_q$.
Given a quantum code $\mathcal{Q}$ over $q$-dimensional qudits, we may consider the code $\mathcal{Q}^{\otimes m}$, composed of $m$ copies of $\mathcal{Q}$. We refer to each copy of the original code as a \emph{code block} of $\mathcal{Q}^{\otimes m}$. Moreover, we call a \textit{logical interblock $(m-1)$-control-$Z$ gate} a circuit composed exclusively of logical $m$-qudit gates of the form $\overline{\mathsf{C}^{m-1}\mathsf{Z}}$, each of which involves at most one logical qudit in each code block. In this work, we obtain the following theorem.

\begin{theorem}\label{Theorem, main}
Let $\ell$ be a prime power, and set $q = \ell^2$. Then, for any integer $m\geq 2$ satisfying $\ell \geq 2(m+1)$, there exists an asymptotically good family of CSS codes $\mathcal{Q}$ with parameters $[[n,k,d]]_q$ over $q$-dimensional qudits such that, for every integer $\widetilde m$ with $1 < \widetilde m \le m$, the depth of any logical circuit on $\mathcal{Q}^{\otimes \widetilde m}$ composed solely of inter-block $(\widetilde m-1)$-control-$Z$ gates is upper bounded by $O(k^{\widetilde m-1})$.
\end{theorem}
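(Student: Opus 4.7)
My plan is to instantiate the construction of \cite{he2025good} at order $m$ and then upgrade its single-gate addressability statement into a parallelization statement, following the strategy outlined in \cite[Section 3.3]{he2025addressable}. Concretely, I would start from Stichtenoth's tower of function fields over $\mathbb F_{\ell^2}$~\cite{Stichtenoth2006}, whose associated Goppa codes are asymptotically good and carry a large transitive automorphism group, and form the CSS construction of \cite{PhysRevX.2.041021,Krishna2019,he2025good}. The first substantive step is to verify the ``addressable $m$-orthogonality'' property of \cite{he2025addressable} at order $m$: this amounts to checking that the pointwise product of up to $m$ vectors in the relevant function space remains inside the larger function space used to define the $Z$-type stabilizers, which is ensured by the hypothesis $\ell\geq 2(m+1)$ through a Riemann-Roch degree inequality. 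This guarantees that physical $\mathsf{C}^{\widetilde m-1}\mathsf{Z}$ gates act consistently on the codespace for every $\widetilde m\le m$.

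Next I would describe the physical implementation of a single logical $\overline{\mathsf{C}^{\widetilde m-1}\mathsf{Z}}$ gate in the spirit of \cite{he2025good}. Given target logical qudits $(i_1,\dots,i_{\widetilde m})$, one picks $\widetilde m$ automorphisms $\pi_1,\dots,\pi_{\widetilde m}$ of the underlying curve, each sending a distinguished coordinate to the one associated with $i_a$, and applies physical $\mathsf{C}^{\widetilde m-1}\mathsf{Z}$ to every tuple $(\pi_1(j),\dots,\pi_{\widetilde m}(j))$ across blocks. The crucial observation, borrowed from the Reed-Solomon analysis of \cite[Section 3.3]{he2025addressable}, is that such a twisted-transversal circuit does not implement a single logical gate but rather $\Theta(k)$ of them in parallel, on a family of logical tuples collectively pinned down by the chosen twist.

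With that in hand, the depth bound becomes combinatorial: the $k^{\widetilde m}$ possible interblock gates can be covered by $O(k^{\widetilde m-1})$ twisted-transversal circuits, each of constant physical depth. Concatenating them yields a circuit of total physical depth $O(k^{\widetilde m-1})$, and restriction to any prescribed subset of the $k^{\widetilde m}$ gates proves the theorem.

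\textbf{Main obstacle.} The crux of the work lies in establishing, for Stichtenoth's codes, the algebraic identity that a twisted transversal indexed by $(\pi_1,\dots,\pi_{\widetilde m})$ realizes $\overline{\mathsf{C}^{\widetilde m-1}\mathsf{Z}}$ precisely on those $\Theta(k)$ logical tuples whose labels are related by a common orbit of the automorphism action. In the Reed-Solomon case this reduces to a Vandermonde-type polynomial identity, but for Stichtenoth's tower one has to exploit the explicit Galois action on the coordinates and verify that its induced action on the logical qudits is simultaneously large enough to give transitivity (and hence $\Theta(k)$ parallel gates per layer) and rigid enough to avoid unintended logical couplings. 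Intertwining this with the order-$m$ orthogonality bookkeeping, while keeping the degree of every auxiliary divisor within the Riemann-Roch budget allowed by $\ell\geq 2(m+1)$, is where the technical work concentrates.
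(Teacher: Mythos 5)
Your plan reproduces the paper's high-level architecture faithfully: Stichtenoth's tower over $\mathbb F_{\ell^2}$, the puncturing framework of \cite{PhysRevX.2.041021,Krishna2019}, the $m$-multiplication property under the budget $\ell\ge 2(m+1)$, twisted transversals parameterized by tuples of Galois automorphisms, and a depth bound obtained by observing that $|\mathrm{Gal}(E_i/F)|^{\widetilde m-1}=k^{\widetilde m-1}$ such tuples suffice. These are indeed the ingredients the paper uses, and your choice of one automorphism per block rather than $\widetilde m-1$ automorphisms with one fixed block is an equivalent convention.

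There is, however, a genuine gap at the step you describe as ``restriction to any prescribed subset of the $k^{\widetilde m}$ gates proves the theorem.'' A bare twisted transversal applies the full orbit of $k$ logical gates with phases \emph{fixed} by the dual vector $\underline u$; it gives you no handle to switch off individual orbit members, nor to set arbitrary angles $\gamma_Q$. The device that makes this possible in the paper is the \emph{modulation function} $M=\sum_{Q\in S}\gamma_Q(\underline u_Q)^{-1}\tilde g_Q\in\mathcal L(H^{(i)})$: each physical gate is taken to be $\mathsf{C}^{\widetilde m-1}\mathsf{Z}^{-\underline u_P M(P)}[P,\sigma_1(P),\dots]$, so that the phase accumulated at $P$ is position-dependent. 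One then invokes the multiplication property on the $\widetilde m+1$ functions $M,\tilde f^0,\dots,\tilde f^{\widetilde m-1}$ — not just the $\widetilde m$ block functions — and the collapse to a sum over $Q\mid\zeta$ yields exactly $\prod_{Q\in S}\overline{\mathsf{C}^{\widetilde m-1}\mathsf{Z}^{\gamma_Q}[Q,\sigma_1(Q),\dots]}$, by the $\delta$-property $\tilde g_Q(Q')=\delta_{Q,Q'}$. This is also why the divisor is scaled by $s=m+1$ rather than $m$: one of the $s$ slots in the Schur-product budget is consumed by $M$ itself. Without this masking step, you cannot decompose an arbitrary inter-block circuit into layers indexed by $\boldsymbol\sigma\in\mathrm{Gal}(E_i/F)^{\widetilde m-1}$, and the depth bound does not follow. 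Your ``main obstacle'' paragraph gestures toward keeping auxiliary divisors within the Riemann--Roch budget, but it attributes the difficulty to rigidity of the Galois action rather than to the need for a tunable masking function in $\mathcal L(H^{(i)})$, which is where the real work lies.
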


By considering the case where $q$ is even, we can also use the method of \cite{nguyen2024goodbinaryquantumcodes}, which was adapted from \cite{Cramer09,Cramer2018}, to convert the good qudit codes into good qubit codes, at the cost of worsening the constant factors of the asymptotic scaling.\\

This article is organized as follows. In Section \ref{preliminaries}, we review important structural properties of classical codes, as well as introductory notions of function fields. We also describe the asymptotically optimal transitive AG code family of Stichtenoth \cite{Stichtenoth2006}.\par
In Section \ref{Good quantum codes with addressable and parallelizable}, we present the quantum code construction, following closely the method and presentation of \cite{he2025good}. We also show in this section that various sets of logical multi-control-$Z$ gates can be addressed in parallel, thus proving Theorem \ref{Theorem, main}.

\section*{Acknowledgement}

We would like to thank Adam Wills for pointing out Section 3.3 of his work \cite{he2025addressable}, as well as Michael Vasmer and Anthony Leverrier for valuable discussions throughout this work. We acknowledge the Plan France 2030 through the project NISQ2LSQ ANR-22-PETQ-0006.

\section{Preliminaries}\label{preliminaries}

\subsection{Linear codes with transitivity and multiplication properties}

In this work, we consider linear codes over the finite field $\mathbb{F}_q$, where $q$ is a square. Recall that a linear $[N,K,D]_q$ code $C$ is defined as a $K$-dimensional subspace $C \subseteq \mathbb{F}_q^N$. The distance $D$ is the minimal Hamming distance between any two vectors of $C$. To obtain our result, we will need codes with additional structure, which we now review.

\paragraph{Transitive codes.}
Given a permutation $\sigma \in S_N$ and a vector $x = (x_1, \ldots, x_N) \in \mathbb{F}_q^N$, we denote by $\sigma(x)$ the vector $(x_{\sigma(1)}, \ldots, x_{\sigma(N)})$. A permutation $\sigma \in S_N$ is an automorphism of the code $C \subseteq \mathbb{F}_q^{N}$ if, for all $c \in C$, we have $\sigma(c) \in C$. The set of all automorphisms of the code $C$ forms a group called the \emph{automorphism group}, denoted ${\rm Aut}(C) \subseteq S_N$.
In this work, we require codes whose automorphism group fulfills the following condition.

\begin{definition}
We call a subgroup $G$ of the symmetric group $S_N$ \emph{transitive} if, for any pair $(i,j)$ with $i,j \in \{1, \ldots, N\}$, there exists a permutation $\sigma \in G$ such that $\sigma(i) = j$. A code $C$ over $\mathbb{F}_q$ of length $N$ is said to be \emph{transitive} if its automorphism group ${\rm Aut}(C)$ is a transitive subgroup of $S_N$.
\end{definition}

\paragraph{Codes with multiplication properties.}
Of particular interest in this work are codes satisfying a strong condition of self-orthogonality. Given any two vectors $x$ and $y$ of $\mathbb{F}_q^N$, we denote their Schur product by
\[
x  \star  y \coloneqq (x_1 y_1, \ldots, x_N y_N).
\]
Given a linear code $C$, we can form the code $C  \star  C = \{\, c  \star  c' \in \mathbb{F}_q^N : c, c' \in C \,\}$, which we also denote by $C^{ \star 2}$. More generally, we define
\[
C^{ \star m} = \{\, c^1  \star  c^2  \star  \dots  \star  c^m : c^1, c^2, \dots, c^m \in C \,\}.
\]
Let $C \subseteq \mathbb{F}_q^{N}$ be a linear code, and let $\underline{u} = (u_1, \ldots, u_N) \in \mathbb{F}_q^N$ be a vector with only nonzero components, i.e., $u_1, \ldots, u_N \neq 0$. We define
\[
\underline{u}  \star  C \coloneqq \{\, (u_1 c_1, \ldots, u_N c_N) \in \mathbb{F}_q^{N} : (c_1, \ldots, c_N) \in C \,\}.
\]
It is clear that the codes $C$ and $\underline{u}  \star  C$ have the same length, dimension, and distance. Note, however, that the automorphism groups ${\rm Aut}(C)$ and ${\rm Aut}(\underline{u}  \star  C)$ are, in general, non-isomorphic.

\begin{definition}
Let $m \geq 1$ be an integer. We say that a linear code $C$ satisfies the \emph{$m$-multiplication property} if there exists a fixed nonzero vector $\underline{u}$ such that
\[
\underline{u}  \star  C^{ \star m} \subseteq C^\perp.
\]
\end{definition}

For any vector space $\mathbb{F}_q^N$, we call the vector $(1, 1, \ldots, 1)$, with all entries equal to one, the \emph{all-one vector}.

\begin{lemma}\label{lemma:multiplication-property}
Let $C$ be a code satisfying the $m$-multiplication property and containing the all-one vector. Then, $C$ satisfies the $\widetilde m$-multiplication property for every $1 \leq \widetilde m \leq m$.
\end{lemma}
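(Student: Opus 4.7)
The plan is to use the same vector $\underline{u}$ that witnesses the $m$-multiplication property of $C$ and show it also witnesses the $\widetilde m$-multiplication property for any $1 \le \widetilde m \le m$. The key observation is that containing the all-one vector $\mathbf{1}$ allows one to pad products: any Schur product $c^1 \star \dots \star c^{\widetilde m}$ with $c^i \in C$ equals $c^1 \star \dots \star c^{\widetilde m} \star \mathbf{1} \star \dots \star \mathbf{1}$, where $m - \widetilde m$ copies of $\mathbf{1} \in C$ are appended. Since $\mathbf{1}$ is the identity for the Schur product, this shows $C^{\star \widetilde m} \subseteq C^{\star m}$.

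From this inclusion, multiplying both sides by $\underline{u}$ (which is a well-defined operation on subsets of $\mathbb{F}_q^N$ because $\underline{u}$ has no zero entries) yields
\[
\underline{u} \star C^{\star \widetilde m} \subseteq \underline{u} \star C^{\star m} \subseteq C^\perp,
\]
where the last inclusion is the assumed $m$-multiplication property. This is precisely the $\widetilde m$-multiplication property with the same witness $\underline{u}$.

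The argument is essentially a one-line verification; no step presents a real obstacle. The only thing worth remarking is that the hypothesis $\mathbf{1} \in C$ is crucial: without it, the embedding $C^{\star \widetilde m} \hookrightarrow C^{\star m}$ need not hold in general, so one could not simply re-use $\underline{u}$. I would present the proof as the short chain of inclusions above, preceded by a single sentence explaining the padding trick.
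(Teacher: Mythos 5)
Your proof is correct and uses the same key idea as the paper — padding the Schur product with copies of the all-one vector to reduce the $\widetilde m$-case to the $m$-case, while reusing the same witness $\underline u$. The paper phrases this by specializing $m - \widetilde m$ of the codewords to $\mathbf 1$ inside the inner-product condition, whereas you phrase it as the set inclusion $C^{\star \widetilde m} \subseteq C^{\star m}$; these are trivially equivalent.
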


\begin{proof}
By definition of the $m$-multiplication property, for any $m$ codewords $c^1, c^2, \dots, c^{m} \in C$ and any vector $c' \in C$, we have
\[
\langle \underline{u}  \star  c^1  \star  c^2  \star  \dots  \star  c^{m}, c' \rangle = 0.
\]
In particular, fixing the first $r$ codewords to be the all-one vector, we obtain
\[
\langle \underline{u}  \star  c^{r+1}  \star  \dots  \star  c^{m}, c' \rangle = 0.
\]
Thus, the code satisfies the $(m-r)$-multiplication property for every $r < m$.
\end{proof}

Notice moreover that $\underline{u}  \star  C \subseteq C^\perp$, and since the all-one vector belongs to $C$, it follows that $\underline{u} \in C^\perp$.

\subsection{Function fields and algebraic geometry codes}

In this section, we fix notation regarding algebraic function fields, as they play a key role in the construction of quantum codes supporting transversal non-Clifford gates. For supplemental definitions regarding the theory of function fields, we refer the reader to Stichtenoth~\cite{Stichtenothbook}.

\paragraph{Function fields.}
Let $\mathbb{F}_q(z)$ be the rational function field. Given an algebraic function field $F / \mathbb{F}_q(z)$, a \emph{place} $P$ of $F$ is the maximal ideal of some valuation ring of $F$. If $P$ is a place, we denote by $\mathcal{O}_P$ its valuation ring. Here, we follow the notation of Stichtenoth~\cite{Stichtenoth2006}; in other contexts, $P$ may be regarded as a point on a curve. The \emph{residue field} at $P$ is the quotient ring $\mathcal{O}_P / P$. The \emph{evaluation} of an element $f \in \mathcal{O}_P$ at $P$ is the image of $f$ under the canonical map $\mathcal{O}_P \to \mathcal{O}_P / P$. It is denoted by $f(P)$ and can be read as $f \bmod P$. The \emph{degree} of a place is the degree of the extension of its residue field. A place is called \emph{rational} if its residue field is isomorphic to $\mathbb{F}_q$, or equivalently, if its degree is equal to one.

Notably, the $q+1$ rational places of the rational function field $\mathbb{F}_q(z)$ correspond to the set $\mathbb{F}_q \cup \{\infty\}$. For $\alpha \in \mathbb{F}_q$, we denote by $(z = \alpha)$ the finite rational place corresponding to the maximal ideal generated by the function $(z - \alpha)$, and by $(z = \infty)$ the rational place corresponding to the point at infinity.

Given an algebraic extension $F / E$ such that $\mathbb{F}_q(z) \subseteq E \subseteq F$, and a place $Q$ of $F$, we say that $Q$ \emph{lies above} a place $P$ of $E$ if $P \subseteq Q$, and we denote this relation by $Q \mid P$. If $Q$ does not lie above $P$, we denote it as $Q \nmid P$. If $F / E$ is of finite degree, the number of places of $F$ lying above a given place of $E$ is at most the degree of the extension $[F : E]$. We say that a place $P$ of $E$ \emph{splits completely} in $F$ if the number of places $Q$ of $F$ such that $Q \mid P$ is exactly $[F : E]$.

\paragraph{Divisors and Riemann-Roch spaces.}
A \emph{divisor} $D$ in a function field $F / \mathbb{F}_q(z)$ is a formal sum of finitely many places with integer coefficients, namely $D = \sum n_P P$. A divisor $D$ is said to be \emph{positive} or \emph{effective}, written $D \geq 0$, if all of its coefficients are positive integers. Given an element $f \in F$, we write $(f)$ for its associated \emph{principal divisor}. Moreover, we denote the \emph{Riemann-Roch space} associated with a divisor $D$ as
\[
\mathcal{L}(D) = \{ f \in F : (f) + D \geq 0 \} \cup \{ 0 \}.
\]
Writing $D = D_{\text{pos}} - D_{\text{neg}}$ as the difference of two effective divisors, we see that $f \in \mathcal{L}(D)$ if and only if $f$ has poles of limited order at the places of $D_{\text{pos}}$ and zeros of sufficient order at the places of $D_{\text{neg}}$. If we have another divisor $D' \geq D$, then $\mathcal{L}(D')$ allows functions with fewer zeros and more poles, and thus $\mathcal{L}(D) \subseteq \mathcal{L}(D')$.

\paragraph{Towers of function fields.}
Let $\mathbb{F}_q$ be a field of characteristic $p$. A \emph{tower of function fields} over $\mathbb{F}_q$ is an infinite sequence $\mathcal{E} = (E_0, E_1, E_2, \ldots)$, where each $E_i$ is an algebraic function field of one variable over $\mathbb{F}_q$, satisfying the following property: for all $i \geq 0$, the extensions $E_{i+1} / E_i$ are separable of finite degree $[E_{i+1} : E_i] > 1$; $\mathbb{F}_q$ is the full constant field of each $E_i$; the genus $g(E_i)$ tends to infinity as $i \to \infty$.\par
Crucially for this work, a tower $\mathcal{E} = (E_0, E_1, E_2, \ldots)$ is called a \emph{Galois tower} if all extensions $E_i / E_0$ are Galois.

\paragraph{Automorphisms of function fields.} 
Given an algebraic extension of function fields $F/E$, an element $\sigma \in \text{Aut}(F /E)$ is a permutation of the elements of $F$ that fixes $E$. When the extension is of finite degree, we have that $|\text{Aut}(F/E)| \leq [F:E]$, with equality for extensions referred to as \textit{Galois extensions}.\par

The action of $\text{Aut}(F /E)$ extends to valuation rings and places. We denote $\sigma(P) \coloneqq \{\sigma(f) : f \in P\}$ and $\mathcal O_{\sigma(P)} \coloneqq \{\sigma(f) : f \in \mathcal O_P\}$. We indeed have that $\sigma(P)$ is the unique maximal ideal of the valuation ring $\mathcal O_{\sigma(P)}$, and moreover, $\sigma$ induces an isomorphism between the residue field of $P$ and that of $\sigma(P)$. Thus, $\text{Aut}(F /E)$ permutes the places of $F$. Importantly, if $P$ is a place of $F$ lying over a place $Q$ of $E$, then $\sigma(P)$ still lies over $Q$, since $\text{Aut}(F /E)$ fixes $E$.\par

Let $P$ be a place of degree one. Then, given an element $f \in \mathcal O_P$, we have the important relation
\begin{equation}\label{equation contravariance}
    f(\sigma(P)) = \sigma^{-1}(f)(P).
\end{equation}
In words, the evaluation of the function $f$ at the place $\sigma(P)$ is equal to the evaluation of the function $\sigma^{-1}(f)$ at $P$. Indeed, since $f(P) \in \mathbb F_q$, we have $f - f(P) \in P$, and thus $\sigma(f) - f(P) \in \sigma(P)$. We therefore obtain $\sigma(f)(\sigma(P)) = f(P)$. Substituting $f$ with $\sigma^{-1}(f)$ yields the relation.\par

The action of $\text{Aut}(F /E)$ extends naturally to an action on the divisor group, by setting, for $D = \sum n_P P$, $\sigma(D) = \sum n_P \sigma(P)$. A divisor is invariant under $\sigma$ if $\sigma(D) = D$.

\paragraph{Algebraic geometry codes.} 
Given a function field $F/\mathbb{F}_q$ and $N$ distinct rational places of $F$, written $P_1, \ldots, P_N$, we can form the divisor $R = P_1 + \ldots + P_N$. Let $H$ be a divisor with ${\rm supp}\, R \cap {\rm supp}\, H = \emptyset$. Moreover, let $\mathcal{L}(H) \subseteq F$ denote the Riemann–Roch space of the divisor $H$. We may consider the \textit{algebraic geometry (AG) code}
\[
C_{\mathcal{L}}(R, H) \coloneqq \{ (f(P_1), \ldots, f(P_N)) \in \mathbb{F}_q^{N} \mid f \in \mathcal{L}(H) \}.
\]
We next review an important property of AG codes that we will use throughout. Let $H$ and $H'$ be two divisors such that $H' \geq H$, i.e., all the coefficients of $H'$ are greater than those of $H$. Then, we have $\mathcal{L}(H) \subseteq \mathcal{L}(H')$ from our previous discussion on divisors and Riemann-Roch spaces. This means that, for AG codes, we have $C_\mathcal{L}(R, H) \subseteq C_\mathcal{L}(R, H')$.\par

What makes AG codes extremely useful is the existence of lower bounds on their parameters arising directly from the Riemann–Roch theorem. A classical result is the following.
\begin{theorem}\label{Theorem standard estimate AG codes}
    Let $g(F)$ denote the genus of the function field $F$, defined over $\mathbb F_q$, and let $R$ and $H$ be defined as above. Suppose that
    \[
        2g(F) - 2 < \deg H < N.
    \]
Then, the code $C_\mathcal{L}(R,H)$ is a $[N, K, D]$-code such that $K = \deg H + 1 - g(F)$ and $D \geq N - \deg(H)$. Moreover, the dual code of $C_\mathcal{L}(R,H)$ has distance $D^\perp \geq \deg(H) - 2g(F)$.
\end{theorem}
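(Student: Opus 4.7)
The plan is to prove the three claims (dimension $K$, primary distance $D$, and dual distance $D^\perp$) in turn, invoking the Riemann--Roch theorem as the main non-elementary input. Throughout I will use the evaluation map $\mathrm{ev}_R \colon \mathcal{L}(H)\to\mathbb{F}_q^N$, $f\mapsto(f(P_1),\ldots,f(P_N))$, whose image is by definition $C_\mathcal{L}(R,H)$.

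For the dimension, I would first show that $\mathrm{ev}_R$ is injective under the hypothesis $\deg H < N$. Indeed, any $f\in\ker\mathrm{ev}_R$ satisfies $f(P_i)=0$ for $i=1,\ldots,N$, which is equivalent to $f\in\mathcal{L}(H-R)$; since $\deg(H-R)=\deg H-N<0$, this Riemann--Roch space is trivial and hence $f=0$. Therefore $K=\dim_{\mathbb{F}_q}\mathcal{L}(H)$. Next, the Riemann--Roch theorem gives
\[
\dim \mathcal{L}(H)-\dim \mathcal{L}(W-H)=\deg H+1-g(F)
\]
for any canonical divisor $W$ of $F$. Since $\deg W=2g(F)-2<\deg H$, we have $\deg(W-H)<0$, which forces $\dim\mathcal{L}(W-H)=0$. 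Hence $K=\deg H+1-g(F)$.

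For the primary distance, suppose $c=\mathrm{ev}_R(f)$ is a nonzero codeword of weight $w$. Let $\{P_{i_1},\ldots,P_{i_{N-w}}\}$ be the positions at which $c$ vanishes, and set $R'=P_{i_1}+\cdots+P_{i_{N-w}}$. Then $f\in\mathcal{L}(H-R')$, which is nontrivial only if $\deg(H-R')=\deg H-(N-w)\geq 0$, i.e.\ $w\geq N-\deg H$. This yields $D\geq N-\deg H$.

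For the dual distance, the standard strategy is to identify $C_\mathcal{L}(R,H)^\perp$ with another AG code and then apply the distance bound just proved. Concretely, one uses the fact (Stichtenoth's textbook, Chapter~II) that the dual is the \emph{differential code} $C_\Omega(R,H)$, and that there exists a Weil differential $\eta$ with simple poles and residue $1$ at every $P_i$ such that
\[
C_\Omega(R,H)=C_\mathcal{L}\bigl(R,\,R-H+(\eta)\bigr),
\]
with $\deg\bigl(R-H+(\eta)\bigr)=N-\deg H+(2g(F)-2)$. Applying the $K$ formula and the distance bound already established to this new AG code immediately yields $\dim C^\perp=N-K$ and
\[
D^\perp\geq N-\deg\bigl(R-H+(\eta)\bigr)=\deg H-2g(F)+2\;\geq\;\deg H-2g(F),
\]
as claimed. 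The main obstacle here is the existence of the differential $\eta$: this is the only step that is not a direct manipulation of Riemann--Roch spaces, and it is where the residue theorem for function fields is invoked. Once $\eta$ is in hand, the rest is bookkeeping with divisor degrees.
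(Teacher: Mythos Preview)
Your proof is correct and follows the standard textbook argument (essentially Stichtenoth, \emph{Algebraic Function Fields and Codes}, Chapter~II). Note, however, that the paper does not actually prove this theorem: it is stated as ``a classical result'' and used as a black box, with the reference to Stichtenoth's book serving as the citation. So there is no proof in the paper to compare against; you have supplied the argument the paper takes for granted.

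One small remark: your dual-distance computation actually yields the sharper bound $D^\perp \geq \deg H - 2g(F) + 2$, whereas the paper records only $D^\perp \geq \deg H - 2g(F)$. Both are correct (yours implies theirs), and the paper's later estimates only use the weaker form, so this discrepancy is harmless. Your identification of the existence of the differential $\eta$ as the one non-elementary step is accurate; once that is granted, the hypotheses $2g(F)-2 < \deg H < N$ are exactly what is needed to make the degree bookkeeping go through on both the primal and dual sides.
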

Of importance to us is the automorphism group of an AG code. Suppose that $\text{Aut}(F /E)$ fixes two given divisors $R$ and $H$. Then, the Riemann–Roch space $\mathcal{L}(H)$ is also invariant under the action of $\text{Aut}(F /E)$. This yields the following result.

\begin{proposition}\label{proposition AG code automorphism}
Let $F/\mathbb F_q(z)$ be a function field, and let $C_\mathcal{L}(R,H)$ be an AG code such that $\text{Aut}(F /E)$ fixes two divisors $R$ and $H$. Then $\text{Aut}(F /E)$ acts on $C_\mathcal{L}(R,H)$ by setting, for $f \in \mathcal{L}(H)$ and $\sigma \in \text{Aut}(F /E)$,
\[
\sigma\big((f(P_1), \ldots, f(P_N))\big) = \big(f(\sigma(P_1)), \ldots, f(\sigma(P_N))\big).
\]
Moreover, if $N > 2g(F) + 2$, this action is injective, and $\text{Aut}(F /E)$ can be seen as a subgroup of $\text{Aut}(C_\mathcal{L}(R,H))$.
\end{proposition}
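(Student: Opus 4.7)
My plan is to show in two steps that the prescribed formula lands in $C_\mathcal{L}(R,H)$ and is independent of the chosen lift $f\in\mathcal{L}(H)$. First, since $\sigma$ fixes $R=P_1+\cdots+P_N$ and the $P_i$ are distinct rational places, $\sigma$ must permute the set $\{P_1,\dots,P_N\}$. Second, combining $\sigma(H)=H$ with the elementary identity $(\sigma(f))=\sigma((f))$ for $f\in F$ yields $\sigma^{-1}(\mathcal{L}(H))=\mathcal{L}(H)$. Applying equation~\eqref{equation contravariance} coordinatewise then gives
\[
\bigl(f(\sigma(P_1)),\dots,f(\sigma(P_N))\bigr)=\bigl(\sigma^{-1}(f)(P_1),\dots,\sigma^{-1}(f)(P_N)\bigr),
\]
which is the evaluation of $\sigma^{-1}(f)\in\mathcal{L}(H)$ at the $P_i$, hence a codeword of $C_\mathcal{L}(R,H)$. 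Independence of the chosen lift is automatic because the right-hand side is simply a coordinate permutation of $(f(P_1),\dots,f(P_N))$; $\mathbb{F}_q$-linearity and the group-action property both follow from this permutation description.

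\textbf{Injectivity under $N>2g(F)+2$.} For the second claim I would suppose that $\sigma\in\text{Aut}(F/E)$ acts trivially on the code and aim at $\sigma=\mathrm{id}_F$. Triviality means that for every $f\in\mathcal{L}(H)$ and every $i$ one has $\sigma^{-1}(f)(P_i)=f(P_i)$, so the difference $h_f=\sigma^{-1}(f)-f$ lies in $\mathcal{L}(H)$ and vanishes at every place of $R$. Since $\mathrm{supp}\,R\cap\mathrm{supp}\,H=\emptyset$, this forces $h_f\in\mathcal{L}(H-R)$; under the standing AG-code hypothesis $\deg H<N$ from Theorem~\ref{Theorem standard estimate AG codes}, $\deg(H-R)<0$ and hence $\mathcal{L}(H-R)=\{0\}$. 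Thus $\sigma^{-1}(f)=f$ for every $f\in\mathcal{L}(H)$. To conclude $\sigma^{-1}=\mathrm{id}_F$, I would then invoke that under $N>2g(F)+2$ (combined with $\deg H>2g(F)-2$), the Riemann–Roch dimension $\dim\mathcal{L}(H)=\deg H+1-g(F)$ is large enough that $\mathcal{L}(H)$ separates places of $F$ and generates it as a field over $E$; an automorphism of $F/E$ fixing a field-generating set pointwise must be the identity on $F$.

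\textbf{Where the effort will go.} The delicate point is pinning down precisely how the hypothesis $N>2g(F)+2$ enters the second step. The pole-counting argument above only requires $\deg H<N$, so the threshold $2g(F)+2$ must come from separation properties of $\mathcal{L}(H)$. I expect the cleanest route to be a Riemann–Roch based separating-functions argument, but an alternative would be to combine the dual-distance bound $D^\perp\ge\deg H-2g(F)$ from Theorem~\ref{Theorem standard estimate AG codes}—which rules out weight-two codewords of $C_\mathcal{L}(R,H)^\perp$ of the form $e_{i}-e_{j}$ and thereby forces the induced coordinate permutation to be trivial—with the classical Hurwitz-style bound stating that a non-trivial automorphism of a genus-$g$ function field fixes at most $2g+2$ rational places. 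Either route converts $N>2g(F)+2$ into the desired injectivity of $\text{Aut}(F/E)\hookrightarrow\text{Aut}(C_\mathcal{L}(R,H))$.
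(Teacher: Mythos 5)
The paper itself does not supply a proof of this proposition; it is quoted as a standard fact, in effect Stichtenoth's treatment of AG-code automorphisms (see Proposition 8.2.3 of \cite{Stichtenothbook}), with only the contravariance identity highlighted afterwards. Your well-definedness paragraph is correct and matches the paper's remark: the invariance of $R$ gives the coordinate permutation, the invariance of $H$ gives $\sigma^{-1}(\mathcal L(H))=\mathcal L(H)$, and Equation~\eqref{equation contravariance} identifies the permuted codeword with the evaluation of $\sigma^{-1}(f)$, so the action lands in $C_\mathcal{L}(R,H)$ and is a bona fide action by coordinate permutations.

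The injectivity paragraph, however, contains a conceptual misstep that makes both of your proposed routes harder than necessary and, as you correctly suspected, leaves a gap. Note that ${\rm Aut}(C_\mathcal{L}(R,H))$ is by the paper's definition a subgroup of $S_N$, so the homomorphism under consideration is $\sigma \mapsto \pi_\sigma$, where $\pi_\sigma$ is the permutation of the index set determined by $\sigma(P_i)=P_{\pi_\sigma(i)}$. Its kernel is therefore $\{\sigma : \sigma(P_i)=P_i \text{ for all } i\}$, i.e.\ automorphisms fixing each of the $N$ rational places pointwise, \emph{not} automorphisms whose induced coordinate permutation merely fixes every codeword. You work with the latter, weaker hypothesis (``$\sigma^{-1}(f)(P_i)=f(P_i)$ for all $f$ and $i$''), which forces you to re-derive something about $\sigma$ from the code rather than from $\sigma$ itself. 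This is why you then need either $\mathcal L(H)$ to generate $F$ as a field (which requires roughly $\deg H \geq 2g+1$), or $D^\perp\ge 3$ (which requires $\deg H \ge 2g+3$); neither follows from $N>2g+2$ together with the standing range $2g-2<\deg H<N$, since $\deg H$ can sit just above $2g-2$. The clean argument uses only the ingredient you named last: once $\sigma$ fixes all $N$ of the $P_i$, the classical Riemann--Hurwitz consequence that a non-trivial automorphism of a genus-$g$ function field fixes at most $2g+2$ rational places immediately forces $\sigma=\mathrm{id}$ when $N>2g+2$. No pole-counting, no generation claim, and no dual-distance estimate is needed; the latter two steps are both under-hypothesized and superfluous.
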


When we later combine automorphisms of a code with their multiplicative properties, it will become crucial to notice that
\begin{equation}
    (f(\sigma(P_1)), \ldots, f(\sigma(P_N))) = (\sigma^{-1}(f)(P_1), \ldots, \sigma^{-1}(f)(P_N)).
\end{equation}
This follows directly from our previous discussion on the action of the automorphism group of a function field, and in particular from Equation~\ref{equation contravariance}.\par

\subsection{Asymptotically good family of transitive AG codes with multiplication property  }

In this work, we leverage the optimal Galois tower devised by Stichtenoth \cite{Stichtenoth2006}. We first review its main characteristics, which will later be used to define the AG codes of \cite{Stichtenoth2006} and to lower bound their parameters.

\begin{theorem}[{{\cite[Theorem 1.7.]{Stichtenoth2006}}}] \label{Theorem tower}
Let $\mathbb F_q$ be a field of characteristic $p$, such that $q$ is a square, i.e.\ $q = \ell^2$. Then, there exists an infinite tower $\mathcal E = (E_0 \subseteq E_1 \subseteq E_2 \subseteq \ldots)$ of function fields $E_i/\mathbb{F}_q$ satisfying the following properties:
\begin{enumerate}
\item $E_0 = \mathbb{F}_q (z)$ is the rational function field.
\item There exists an element $w \in E_1$ such that $w^{\ell -1} = z$. Hence, we have $E_0 \subseteq \mathbb{F}_q (w) \subseteq E_1$, and the extension $\mathbb{F}_q (w) / E_0$ is of degree $(\ell -1)$.
\item All extensions $E_i / E_0$ are Galois, and the degree of $E_i / E_0$ is $N_i \coloneqq [E_i : E_0] = (\ell -1)\ell^n p^{t(i)},$ where $t(i)$ is a non-negative integer.
\item The place $(z = 1)$ of $E_0$ splits completely in all extensions $E_i / E_0$. Namely, there are $[E_i : E_0]$ distinct places of $E_i$ above $(z = 1)$, and all of them are rational places of $E_i$.
\item The principal divisor of the function $w$ in $E_i$ is of the form $(w)^{E_i} = e^{(i)}_{0} A^{(i)} - e^{(i)}_{\infty} B^{(i)}$, where $A^{(i)} > 0$ and $B^{(i)} > 0$ are effective divisors of $E_i$. The integer $e^{(i)}_{0}$ is the ramification index of the place $(w = 0)$ in $E_i / \mathbb{F}_q (w)$ and has the form $e^{(i)}_{0} = \ell^{n-1} p^{r(i)}, \quad r(i) \geq 0$. The integer $e^{(i)}_{\infty}$ is the ramification index of the place $(w = \infty)$ in $E_i / \mathbb{F}_q (w)$ and has the form $e^{(i)}_{\infty} = \ell^n p^{s(i)},$ where $s(i) \geq 0.$
\item We have the following relation $e_0^{(i)}\deg A^{(i)} = e_\infty^{(i)}\deg B^{(i)} = [E_i:\mathbb{F}_q(w)]$.
\item The genus $g(E_i)$ satisfies
\[
g(E_i) = [E_i : \mathbb F_q (w)] + 1 - \big(\deg A^{(i)} + \deg B^{(i)}\big) \leq [E_i : \mathbb F_q (w)].
\]
\end{enumerate}
\end{theorem}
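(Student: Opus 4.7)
The plan is to construct the tower $\mathcal{E}$ explicitly as the Galois closure over $E_0$ of a composite obtained by combining a Kummer extension with the Garcia--Stichtenoth asymptotically optimal Artin--Schreier tower. The starting point is $E_0 = \mathbb{F}_q(z)$, and the guiding idea is that the Garcia--Stichtenoth tower already attains the Drinfeld--Vlăduţ bound; the extra work consists in enforcing the Galois property over $E_0$ without destroying the asymptotic behavior, and in arranging that the place $(z=1)$ splits completely all the way up.

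First, I would adjoin an element $w$ satisfying $w^{\ell-1} = z$, producing a Kummer extension $\mathbb{F}_q(w)/E_0$ of degree $\ell-1$, which secures property (2). Since $(\ell-1) \mid (q-1)$, the place $(z=1)$ admits $\ell-1$ distinct preimages in $\mathbb{F}_q$ and therefore splits completely in this Kummer step, while $(z=0)$ and $(z=\infty)$ are totally ramified. Second, on top of $\mathbb{F}_q(w)$ I would build the Garcia--Stichtenoth tower $F_0 \subseteq F_1 \subseteq \ldots$ by recursively adjoining Artin--Schreier roots of equations of the form $y^\ell + y = \varphi_i(w,\ldots)$, chosen so that ramification is concentrated above $(w=0)$ and $(w=\infty)$, and so that the places above $(z=1)$ remain unramified with residue field $\mathbb{F}_q$ (using that the relevant traces $\mathbb{F}_q \to \mathbb{F}_\ell$ vanish on the evaluations at $z=1$).

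Third, I would define $E_i$ as the Galois closure of $F_i$ over $E_0$. Property (3) then holds by construction, and the precise form of the degree $[E_i:E_0] = (\ell-1)\ell^{n}\,p^{t(i)}$ would be obtained by writing $E_i$ as the compositum of the $\mathrm{Gal}(\overline{E_0}/E_0)$-conjugates of $F_i$ and bounding its degree: the factors $(\ell-1)$ and $\ell^n$ come respectively from the Kummer and Artin--Schreier pieces, while $p^{t(i)}$ captures the extra wild ramification absorbed by the closure. Property (4) is preserved because the compositum of fields in which a given place splits completely again has that splitting property. Properties (5) and (6) follow from tracking the principal divisor of $w$: the zero and pole of $w$ in $\mathbb{F}_q(w)$ extend to effective divisors $A^{(i)}$ and $B^{(i)}$ of $E_i$ with tame contributions $\ell^{n-1}, \ell^n$ composed with wild contributions $p^{r(i)}, p^{s(i)}$ from the Artin--Schreier steps, and the identity $e_0^{(i)}\deg A^{(i)} = e_\infty^{(i)}\deg B^{(i)} = [E_i:\mathbb{F}_q(w)]$ is just the fundamental equality of ramification in a Galois extension applied to $(w=0)$ and $(w=\infty)$. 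Finally, property (7) follows from the Riemann--Hurwitz formula applied to $E_i/\mathbb{F}_q(w)$, using that ramification lives only over $(w=0)$ and $(w=\infty)$ and that the different contribution combines with $\deg A^{(i)} + \deg B^{(i)}$ to yield the stated genus bound.

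The main obstacle is the third step: controlling the Galois closure. In general, taking the closure can introduce unbounded wild ramification, which could both blow up the degree beyond the claimed form and ruin the genus estimate in property (7). The crux of Stichtenoth's argument is an inductive analysis of the decomposition and inertia groups at the places of $E_i$ lying over $(w=0)$ and $(w=\infty)$, showing that their wild parts grow only by a bounded $p$-power $p^{t(i)}$ at each level. This bound is what simultaneously pins down the exact form of $[E_i:E_0]$, the ramification indices $e_0^{(i)}, e_\infty^{(i)}$, and the genus inequality $g(E_i) \leq [E_i:\mathbb{F}_q(w)]$, and is the most delicate part of the proof.
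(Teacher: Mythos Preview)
The paper does not prove this theorem at all: it is quoted verbatim from Stichtenoth's work as \cite[Theorem~1.7]{Stichtenoth2006} and used as a black box, so there is no ``paper's own proof'' to compare your proposal against. Your sketch is a reasonable high-level outline of Stichtenoth's construction (Kummer step, Garcia--Stichtenoth Artin--Schreier tower, Galois closure, and the ramification analysis that controls the wild part), and you correctly identify the control of the Galois closure as the delicate point; but since the present paper simply imports the result, no comparison is possible here.
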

In his work, Stichtenoth \cite{Stichtenoth2006} proves that this tower of function fields is optimal, namely it attains the Drinfeld–Vladut bound $q^{1/2} - 1$.\par

\paragraph{Stichtenoth's AG code construction.}
We now focus on the family of AG codes constructed by Stichtenoth \cite{Stichtenoth2006}. These are associated with the function fields in the tower $\mathcal E = (E_0, E_1, E_2, \ldots)$ of Theorem~\ref{Theorem tower}. In the following definition, we use the notations of that theorem.

\begin{definition}\label{definition Stichtenoth codes}
Let $R^{(i)}$ be the divisor corresponding to the sum of all the places of $E_i$ lying over $(z = 1)$, namely $R^{(i)} = \sum_{P \mid (z = 1)} P$. Let $a_i\coloneqq  2( e_0^{(i)}-1)$ and $b_i\coloneqq(\ell-1) e_\infty^{(i)}-2$. For integers $a,b$ with $0 \leq a \leq a_i$ and $0 \leq b \leq b_i$, let $H^{(i)} = a A^{(i)} + b B^{(i)}$ be a divisor of $E_i$. The code $C^{(i)}_{a,b}$ of \cite{Stichtenoth2006} is defined by
\[
C^{(i)}_{a,b} \coloneqq C_{\mathcal{L}}(R^{(i)}, H^{(i)}).
\]
\end{definition}
To learn about the origin of the integers $a_i,b_i$, see the proof of Proposition 4.4 in \cite{Stichtenoth2006}.
For our purposes, there are three key characteristics of these codes that may all be fulfilled in certain parameter regimes: they may be transitive, asymptotically good, and satisfy the multiplication property.

\paragraph{Transitive group action.}

\begin{lemma}
For any $i$ and any integers $a,b$ with $0 \leq a \leq a_i$ and $0 \leq b \leq b_i$, the codes $C^{(i)}_{a,b}$ are transitive.
\end{lemma}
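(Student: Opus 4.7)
The plan is to exhibit a transitive subgroup of $\mathrm{Aut}(C^{(i)}_{a,b})$ by embedding $\mathrm{Gal}(E_i / E_0)$ into it via Proposition \ref{proposition AG code automorphism}. The key ingredients will be (a) transitivity of the Galois action on places above a completely split place, and (b) Galois invariance of the defining divisors $R^{(i)}$ and $H^{(i)}$.

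First I would recall that for a Galois extension of function fields $E_i / E_0$, the Galois group acts transitively on the set of places of $E_i$ lying over any fixed place of $E_0$. By item 3 of Theorem \ref{Theorem tower} the extension $E_i / E_0$ is Galois, and by item 4 the place $(z = 1)$ splits completely in $E_i$, so the $N_i = [E_i : E_0]$ rational places whose sum forms $R^{(i)}$ are precisely the places of $E_i$ above $(z = 1)$. Hence $\mathrm{Gal}(E_i/E_0)$ permutes the support of $R^{(i)}$ transitively, and in particular $R^{(i)}$ is fixed as a divisor.

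The more delicate point is invariance of $H^{(i)} = aA^{(i)} + bB^{(i)}$. For this I would use the principal divisor identity $(w)^{E_i} = e_0^{(i)} A^{(i)} - e_\infty^{(i)} B^{(i)}$ from item 5, together with the observation that any $\sigma \in \mathrm{Gal}(E_i/E_0)$ must send $w$ to another $(\ell-1)$-th root of $z = w^{\ell-1}$. Since $z \in E_0$ is fixed, we obtain $\sigma(w) = \zeta_\sigma w$ for some $(\ell-1)$-th root of unity $\zeta_\sigma \in \mathbb{F}_q$. Multiplying by a nonzero constant does not change the principal divisor, so $(\sigma(w))^{E_i} = (w)^{E_i}$; equivalently, $\sigma$ fixes $(w)^{E_i}$ as a divisor. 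Because $A^{(i)}$ and $B^{(i)}$ have disjoint supports (lying respectively above $(w=0)$ and $(w=\infty)$ in $\mathbb{F}_q(w)$) and appear as the positive and negative parts of $(w)^{E_i}$, up to the constant factors $e_0^{(i)}, e_\infty^{(i)}$, each of $A^{(i)}$ and $B^{(i)}$ is individually fixed, whence so is $H^{(i)}$.

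With both $R^{(i)}$ and $H^{(i)}$ shown to be $\mathrm{Gal}(E_i/E_0)$-invariant, Proposition \ref{proposition AG code automorphism} provides an action of $\mathrm{Gal}(E_i/E_0)$ on $C^{(i)}_{a,b}$ that realises, on the coordinates of the code, exactly the permutation action on the $N_i$ evaluation places. Since that action was already shown to be transitive, the image in $\mathrm{Aut}(C^{(i)}_{a,b}) \subseteq S_{N_i}$ is a transitive subgroup, so the code is transitive in the sense of the definition. I expect the only subtle step to be the Galois-invariance of $A^{(i)}$ and $B^{(i)}$, since these divisors are not themselves principal; once one notes that elements of $\mathrm{Gal}(E_i/E_0)$ act on $w$ by a constant, however, the argument reduces to the observation that principal divisors are constant-scale invariant and that the positive/negative parts of a fixed divisor are individually fixed.
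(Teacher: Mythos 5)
Your proof is correct and follows essentially the same approach as the paper: transitivity of $\mathrm{Gal}(E_i/E_0)$ on the completely split places above $(z=1)$ gives transitivity on the support of $R^{(i)}$, Galois invariance of $R^{(i)}$ and $H^{(i)}$ is established, and Proposition~\ref{proposition AG code automorphism} embeds the Galois group as a transitive subgroup of $\mathrm{Aut}(C^{(i)}_{a,b})$. The paper merely asserts that $A^{(i)}$ and $B^{(i)}$ are invariant; your derivation of $\sigma(w)=\zeta_\sigma w$ from $(\sigma(w))^{\ell-1}=z$ and the resulting invariance of the principal divisor $(w)^{E_i}$, hence of its effective positive and negative parts, is a clean way to supply that missing detail.
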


To establish this property of the codes $C^{(i)}_{a,b}$, notice that the places $P_1, \ldots, P_N$ in the support of the divisor $R^{(i)}$ are precisely the places of $E_i$ lying above $(z = 1)$, which splits completely. The Galois group ${\rm Gal}(E_i / E_0)$ therefore acts transitively on the support of $R^{(i)}$. It can also be shown that the divisors $A^{(i)}$ and $B^{(i)}$ are invariant under ${\rm Gal}(E_i / E_0)$, so that the Galois group leaves the divisor $H^{(i)}$ invariant as well.

\paragraph{Multiplication property.} 
A central aspect of the family of codes devised by Stichtenoth \cite{Stichtenoth2006} that we shall leverage for our quantum code construction is their multiplication property. 

\begin{proposition}\label{proposition dual AG code}
There exists a vector $\underline{u}$ with only non-zero entries such that
\[
(C^{(i)}_{a,b})^{\perp} = \underline{u}  \star  C^{(i)}_{a_i - a, b_i - b}.
\]
\end{proposition}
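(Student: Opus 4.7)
My plan is to apply the standard duality theorem for algebraic geometry codes, specifically \cite[Thm.~2.2.8 and Prop.~2.2.10]{Stichtenothbook}: for any divisor $H$ with disjoint support from $R$, the dual code of $C_\mathcal{L}(R,H)$ is
$$C_\mathcal{L}(R,H)^\perp \;=\; \underline u \star C_\mathcal{L}\bigl(R,\; R + (\eta) - H\bigr),$$
where $\eta$ is any Weil differential on $F$ whose divisor has a simple pole at every place $P_j$ in the support of $R$ (i.e.\ $\nu_{P_j}(\eta) = -1$), and $u_j \coloneq \mathrm{res}_{P_j}(\eta)$. Applied to $F = E_i$ with $R = R^{(i)}$ and $H = H^{(i)} = aA^{(i)} + bB^{(i)}$, the proposition reduces to exhibiting a Weil differential $\eta$ on $E_i$ with
$$(\eta) \;=\; a_i A^{(i)} + b_i B^{(i)} - R^{(i)}$$
and non-vanishing residues at the places of $R^{(i)}$, for then $R^{(i)} + (\eta) - H^{(i)} = (a_i - a)A^{(i)} + (b_i - b)B^{(i)}$ is exactly the divisor defining $C^{(i)}_{a_i - a,\, b_i - b}$.

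The natural candidate is the differential $\eta = dz/(z-1)$, pulled back from the base field $E_0 = \mathbb F_q(z)$, where its principal divisor is $-(z=1) - (z=\infty)$. To compute $(\eta)$ inside $E_i$, I would combine three ingredients from Theorem~\ref{Theorem tower}. First, since $(z=1)$ splits completely in $E_i/E_0$, every place $P_j \mid (z=1)$ is unramified and $z-1$ is a local uniformizer there, so $\eta$ contributes a simple pole with residue $1$ at each such $P_j$, producing the summand $-R^{(i)}$. Second, the relation $z = w^{\ell - 1}$ together with $(w)^{E_i} = e_0^{(i)} A^{(i)} - e_\infty^{(i)} B^{(i)}$ locates the remaining zeros and poles of $\eta$ at the places in the support of $A^{(i)}$ and $B^{(i)}$. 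Third, the ramification/different computation in the (possibly wildly ramified) extension $E_i/\mathbb F_q(w)$ converts the local behaviour of $dz$ and of $z-1$ at these places into exact integer multiplicities at $A^{(i)}$ and $B^{(i)}$, which by construction of the integers $a_i, b_i$ in Stichtenoth's paper match precisely the claimed coefficients $a_i$ and $b_i$.

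By the first ingredient the residue vector $\underline u = (\mathrm{res}_{P_j}(\eta))_j$ is identically equal to $1$, hence in particular has only non-zero entries, so the resulting duality has the claimed form. The main obstacle in this plan is the divisor/different calculation underlying the third ingredient: one must work inside the potentially wildly ramified Galois tower $\mathcal E$, using the explicit ramification indices $e_0^{(i)} = \ell^{n-1} p^{r(i)}$ and $e_\infty^{(i)} = \ell^n p^{s(i)}$ together with the Hurwitz-genus/different bookkeeping encoded in item 7 of Theorem~\ref{Theorem tower}, in order to verify that the coefficients of $A^{(i)}$ and $B^{(i)}$ in $(\eta)$ are indeed the integers $a_i, b_i$ fixing the allowed range in Definition~\ref{definition Stichtenoth codes}. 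Once this divisor identity is in hand, the proposition is an immediate consequence of the duality formula applied with $\eta$.
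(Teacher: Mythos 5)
The paper itself does not prove this proposition — it states it and remarks that the nature of $\underline u$ ``involves the manipulation of differentials of function fields'' and is not discussed, deferring entirely to Stichtenoth \cite{Stichtenoth2006}. Your plan reconstructs exactly that argument: apply the residue-duality theorem for AG codes with $\eta = dz/(z-1)$, use complete splitting of $(z=1)$ to obtain simple poles of residue $1$ at every $P_j$, and track the remaining zeros and poles through the ramification of $E_i/\mathbb F_q(w)$; moreover your observation that $\underline u$ is in fact the all-ones vector is correct and strictly sharper than the proposition's existential phrasing.

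One adjustment to how you frame your third ingredient: you say the different computation is needed ``to verify that the coefficients of $A^{(i)}$ and $B^{(i)}$ in $(\eta)$ are indeed the integers $a_i, b_i$.'' In \cite{Stichtenoth2006} those integers are \emph{defined} as the coefficients of $A^{(i)}$ and $B^{(i)}$ in $(\eta)^{E_i}$, and the present paper inherits them from there without restating the definition, so there is no independently given target to match against. The real content of the divisor/different calculation is an existence and positivity statement — that $(\eta)^{E_i}$ genuinely has the form $c A^{(i)} + c' B^{(i)} - R^{(i)}$ for non-negative integers $c, c'$, and that these are large enough for the parameter range $0 \le a \le a_i$, $0 \le b \le b_i$ to yield interesting codes. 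With that reframing your argument is sound and matches the source the paper points to.
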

The nature of the vector $\underline{u}$ is not discussed here, as it involves the manipulation of differentials of function fields. Nevertheless, this result may be used to prove the following lemma.

\begin{lemma}\label{lemma multiplication property Stichtenoth}
Let $s\geq 2$ be an integer and let $m=s-1$. Suppose $a_i, b_i \geq s$, and fix the parameters $a = \left\lfloor \frac{a_i}{s} \right\rfloor$ and $b = \left\lfloor \frac{b_i}{s} \right\rfloor$. Then, for any $1 \leq \widetilde m \leq m$, the code $C^{(i)}_{a,b}$ satisfies
\[
\underline{u}  \star  \big(C^{(i)}_{a,b}\big)^{ \star \widetilde m} \subseteq (C^{(i)}_{a,b})^\perp.
\]
In words, the code $C^{(i)}_{a,b}$ has the $\widetilde m$-multiplication property for every $\widetilde m \leq m$.
\end{lemma}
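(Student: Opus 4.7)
The strategy is to reduce the inclusion to an inclusion between two codes of the form $C^{(i)}_{\ast,\ast}$, using two elementary facts about AG codes: Schur products send $C_{\mathcal L}(R,H)\star C_{\mathcal L}(R,H')$ into $C_{\mathcal L}(R,H+H')$ (because products of functions in Riemann--Roch spaces lie in the Riemann--Roch space of the sum of the divisors), and $H\le H'$ implies $C_{\mathcal L}(R,H)\subseteq C_{\mathcal L}(R,H')$. Together these let us control both the left-hand side (a Schur power of $C^{(i)}_{a,b}$) and the right-hand side (the dual, via Proposition \ref{proposition dual AG code}) as instances of the Stichtenoth codes themselves, and the remaining task is just bookkeeping of the parameters $a,b,a_i,b_i$.

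First, I would iterate the Schur product bound for AG codes to obtain
\[
\big(C^{(i)}_{a,b}\big)^{\star\widetilde m}\subseteq C_{\mathcal L}\!\left(R^{(i)},\,\widetilde m a\cdot A^{(i)}+\widetilde m b\cdot B^{(i)}\right)=C^{(i)}_{\widetilde m a,\,\widetilde m b},
\]
which is meaningful provided $\widetilde m a\le a_i$ and $\widetilde m b\le b_i$. With $a=\lfloor a_i/s\rfloor$ and $\widetilde m\le m=s-1$, we have $(\widetilde m+1)a\le s\lfloor a_i/s\rfloor\le a_i$, and similarly for $b$, so in fact the stronger inequalities $\widetilde m a\le a_i-a$ and $\widetilde m b\le b_i-b$ hold.

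Next, I would use the monotonicity of Riemann--Roch spaces in the divisor (stated in the text) to deduce that since $\widetilde m a\cdot A^{(i)}+\widetilde m b\cdot B^{(i)}\le (a_i-a)A^{(i)}+(b_i-b)B^{(i)}$, we have
\[
C^{(i)}_{\widetilde m a,\,\widetilde m b}\subseteq C^{(i)}_{a_i-a,\,b_i-b}.
\]
Multiplying this inclusion entrywise by the fixed nonzero vector $\underline u$ of Proposition \ref{proposition dual AG code} (which is a bijection on $\mathbb F_q^N$) yields
\[
\underline u\star\big(C^{(i)}_{a,b}\big)^{\star\widetilde m}\subseteq \underline u\star C^{(i)}_{a_i-a,\,b_i-b}=(C^{(i)}_{a,b})^\perp,
\]
which is the desired $\widetilde m$-multiplication property.

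There is essentially no obstacle in this proof once Proposition \ref{proposition dual AG code} is in hand; the only thing to verify carefully is that the parameter condition $a_i,b_i\ge s$ combined with the choice $a=\lfloor a_i/s\rfloor$, $b=\lfloor b_i/s\rfloor$ yields $(\widetilde m+1)a\le a_i$ and $(\widetilde m+1)b\le b_i$ for all $\widetilde m\le s-1$, which is what simultaneously keeps $C^{(i)}_{\widetilde m a,\,\widetilde m b}$ inside the admissible range and forces its containment in $C^{(i)}_{a_i-a,\,b_i-b}$. All the genuine algebraic geometry content is packaged into Proposition \ref{proposition dual AG code}.
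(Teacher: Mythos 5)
Your proof is correct, and its core is the same as the paper's: iterate the Schur-product/Riemann--Roch bound to get $\big(C^{(i)}_{a,b}\big)^{\star \widetilde m}\subseteq C^{(i)}_{\widetilde m a,\widetilde m b}$, use divisor monotonicity to push this into $C^{(i)}_{a_i-a,b_i-b}$, and then convert via Proposition~\ref{proposition dual AG code}. The one place you diverge is in how the case $\widetilde m < m$ is handled: the paper establishes the $m$-multiplication property only, observes that the constant function $1$ lies in $\mathcal L(H^{(i)})$ so the all-one vector is a codeword, and then invokes Lemma~\ref{lemma:multiplication-property} to descend to every $\widetilde m \le m$. You instead notice directly that the floor-function bookkeeping $(\widetilde m+1)a \le s\lfloor a_i/s\rfloor \le a_i$ already gives $\widetilde m a \le a_i - a$ for \emph{every} $\widetilde m \le m$, so the same divisor containment $C^{(i)}_{\widetilde m a,\widetilde m b}\subseteq C^{(i)}_{a_i-a,b_i-b}$ holds uniformly and Lemma~\ref{lemma:multiplication-property} is never needed. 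Your route is slightly more self-contained (it bypasses the all-one-vector observation entirely), while the paper's modularization through Lemma~\ref{lemma:multiplication-property} emphasizes a reusable structural fact about codes containing the all-one vector. Both are valid; the difference is a matter of packaging.
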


We defer the proof of this lemma to Appendix~\ref{Appendix deferred proofs on code parameters}, for readability.  

\paragraph{Asymptotic performance.}
We now estimate the asymptotic performance of the family of codes $C^{(i)}_{a,b} \coloneqq C_{\mathcal{L}}(R^{(i)}, H^{(i)})$, $i \in \mathbb N$.

\begin{proposition}\label{proposition classical code parameters}
Let $\ell, s$ be positive integers such that $s\geq 2$ and $\ell \geq 2 s-1 $, and let $q = \ell^2$. Let the integers $a,b$ be such that $a = \left\lfloor \frac{a_i}{s} \right\rfloor$ and $b = \left\lfloor \frac{b_i}{s} \right\rfloor$. Then, the code $C^{(i)}_{a,b}$ is an $[N_i, K_i, D_i]_q$ code over $\mathbb F_q$ such that $2g(E_i) - 2 < \deg H^{(i)} < N_i$. Hence,
\begin{align*}
    K_i&\geq  \left( \frac{\ell+1-s}{s(\ell-1)}\right)N_i,\\
    D_i &\geq \left (1-\frac{3}{s(\ell-1)}\right)N_i,\\
   D_i^\perp  &\geq\left( \frac{\ell+1}{s(\ell-1)}\right)N_i.
\end{align*}
In particular, the sequence $(C^{(i)}_{a,b})_{i\geq 1}$ forms an asymptotically good family of codes.
\end{proposition}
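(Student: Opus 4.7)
The plan is to apply the standard estimate for AG codes (Theorem \ref{Theorem standard estimate AG codes}) to the divisor $H^{(i)} = aA^{(i)} + bB^{(i)}$, using the numerical data in items 6 and 7 of Theorem \ref{Theorem tower}. Everything reduces to tracking the degree of $H^{(i)}$ and comparing it with the genus.

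First, I would express $\deg A^{(i)}$ and $\deg B^{(i)}$ using item 6, which gives $e_0^{(i)}\deg A^{(i)}=e_\infty^{(i)}\deg B^{(i)}=[E_i:\mathbb F_q(w)]=N_i/(\ell-1)$. Writing $a=\lfloor a_i/s\rfloor$ and $b=\lfloor b_i/s\rfloor$, one has $a\deg A^{(i)} \le (a_i/s)\deg A^{(i)}$ and $a\deg A^{(i)} > (a_i/s-1)\deg A^{(i)}$, and similarly for $b\deg B^{(i)}$, so $\deg H^{(i)}$ is controlled on both sides in terms of the quantity $a_i\deg A^{(i)}+b_i\deg B^{(i)}$ (which parameterises the ``maximal'' admissible AG code in this family) and a correction of order $\deg A^{(i)}+\deg B^{(i)}$. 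This last correction is negligible compared to $N_i$ since $\deg A^{(i)},\deg B^{(i)}\ll [E_i:\mathbb F_q(w)]$.

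Next, I would verify the Riemann--Roch hypothesis $2g(E_i)-2<\deg H^{(i)}<N_i$. The upper bound follows because, after dividing by $s\ge 2$, the degree of $H^{(i)}$ is at most a fraction $1/s<1$ of the maximal degree, which is itself bounded above by $N_i$ up to lower order terms. The lower bound is where the assumption $\ell\ge 2s-1$ enters: using the genus bound $g(E_i)\le [E_i:\mathbb F_q(w)]=N_i/(\ell-1)$ from item 7, a short calculation shows that $\deg H^{(i)}>2g(E_i)-2$ as soon as $\ell\ge 2s-1$ and $N_i$ is large enough, which is guaranteed by the tower.

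Then I would invoke Theorem \ref{Theorem standard estimate AG codes} directly: $K_i=\deg H^{(i)}+1-g(E_i)$, $D_i\ge N_i-\deg H^{(i)}$, and $D_i^\perp\ge \deg H^{(i)}-2g(E_i)$. Substituting the bounds on $\deg H^{(i)}$ obtained in the first step together with the genus bound $g(E_i)\le N_i/(\ell-1)$, and absorbing the $O(\deg A^{(i)}+\deg B^{(i)})$ correction into the asymptotic constants, yields the three stated lower bounds on $K_i$, $D_i$ and $D_i^\perp$, each of the form $\text{(constant)}\cdot N_i$. Asymptotic goodness of the family $(C^{(i)}_{a,b})_{i\ge 1}$ then follows because the three ratios $K_i/N_i$, $D_i/N_i$ and $D_i^\perp/N_i$ are bounded below by strictly positive constants depending only on $\ell$ and $s$ (using $\ell\ge 2s-1$ and $\ell\ge 4$ to ensure positivity), while $N_i\to\infty$ along the tower.

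The main obstacle I expect is bookkeeping rather than conceptual: one must carefully convert the condition $\ell\ge 2s-1$ into the concrete inequality $\deg H^{(i)}>2g(E_i)-2$ at the right moment, and then propagate the floor-function slack through the three estimates without overly degrading the constants. Since the conclusion only requires lower bounds of the correct linear order in $N_i$, any $O(1)$ loss from the floors and from $2g(E_i)-2$ versus $2g(E_i)$ can be absorbed for $i$ large, so the argument is robust.
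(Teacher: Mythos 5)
Your plan is structurally the same as the paper's proof: bound $\deg H^{(i)}$ using items 6 and 7 of Theorem~\ref{Theorem tower}, verify the Riemann--Roch hypothesis $2g(E_i)-2<\deg H^{(i)}<N_i$, then apply Theorem~\ref{Theorem standard estimate AG codes}. That identification is correct. However, there is a concrete gap: you never pin down what $a_i$ and $b_i$ actually are, and without this you cannot arrive at the specific constants the proposition claims. The paper's calculation substitutes the explicit values $a_i \le 2(e_0^{(i)}-1)$ and $b_i \le (\ell-1)e_\infty^{(i)}-2$ (inherited from Stichtenoth's construction), and it is precisely these formulas, combined with $e_0^{(i)}\deg A^{(i)}=e_\infty^{(i)}\deg B^{(i)}=[E_i:\mathbb F_q(w)]=N_i/(\ell-1)$, that produce the exact numbers $\tfrac{\ell+1-s}{s(\ell-1)}$, $1-\tfrac{3}{s(\ell-1)}$, and $\tfrac{\ell+1}{s(\ell-1)}$. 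Saying that $a_i\deg A^{(i)}+b_i\deg B^{(i)}$ ``parameterises the maximal admissible AG code'' and is ``bounded above by $N_i$ up to lower order terms'' does not substitute for this; in fact $a_i\deg A^{(i)}+b_i\deg B^{(i)}$ is roughly $3[E_i:\mathbb F_q(w)]=3N_i/(\ell-1)$, which is \emph{much} smaller than $N_i$, and this constant factor is exactly what feeds into the three stated bounds, so treating it as an unspecified quantity close to $N_i$ leads to wrong (or at least unverified) constants.

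Two smaller issues. First, you write ``after dividing by $s\ge 2$,'' but the statement only assumes $s\ge 1$ (and indeed the paper explicitly uses $s\ge 1$ together with $\ell\ge 4$ for the bound $\deg H^{(i)}<N_i$); so that particular justification is off even though the conclusion it is meant to support happens to hold. Second, the proposition's three displayed inequalities are claimed as exact lower bounds, not just asymptotic ones, so you cannot simply ``absorb the $O(\deg A^{(i)}+\deg B^{(i)})$ correction into the asymptotic constants'' as a late step; the paper instead keeps the positive correction $(1-\tfrac{2}{s})(\deg A^{(i)}+\deg B^{(i)})$ explicit and discards it using $\deg A^{(i)},\deg B^{(i)}>0$. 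To close the gap you should import (or restate) the Stichtenoth bounds on $a_i,b_i$, plug them in, and track the constants exactly as the paper does.
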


For conciseness, the proof is deferred to Appendix~\ref{Appendix deferred proofs on code parameters}, where we show that the degree of the divisor $H^{(i)}$ is bounded as claimed. The results on the dimension and distance of the codes then follow directly from Theorem~\ref{Theorem standard estimate AG codes}.
For simplicity, we express the code parameters in terms of $s$, but notice that when $s=m+1$, the code satisfies the $m$-multiplication property.

\begin{remark}
Crucially, \cite{he2025good} fixed the parameters $a = \left\lfloor \frac{a_i}{4} \right\rfloor$ and $b = \left\lfloor \frac{b_i}{4} \right\rfloor$, and proved that the code family $(C^{(i)}_{a,b})_{i\in \mathbb N}$ is asymptotically good and each member satisfies the $ m$-multiplication property for any $ m \leq 3$. This was key to show that logical $\mathsf{CCZ}$ gates may be applied on every possible triple of logical qubits. Later, we take $s\geq 4$ and show that many of these gates can be performed in parallel.
\end{remark}

\section{Good quantum codes with addressable and parallelizable logical $\mathsf{C}^{m-1}Z$ Gates}\label{Good quantum codes with addressable and parallelizable}

\subsection{Good quantum code with multiplication property}

Given the code $C_{a,b}^{(i)}$ of dimension $K_i$, a generator matrix is a full-rank $K_i \times N_i$ matrix such that every row represents the valuation of a basis element (a function) in the Riemann-Roch space $\mathcal L(H^{(i)})$. 

Now, let us choose any function field $F$ in the tower $\mathcal E = E_0 \subseteq \mathbb{F}_q(w) \subseteq F_0 \subseteq E_1 \subseteq E_2 \subseteq \dots$, satisfying $F_0 \subseteq F \subseteq E_i$, where all extensions are Galois and the intermediate extension $F_0\coloneqq \mathbb F_q(x_0)$ is a degree $\ell$ Galois extension of $\mathbb F_q(w)$ defined by the polynomial $w=x_0^\ell +x_0$. We may choose an arbitrary rational place $\zeta$ of $F$ such that $\zeta \mid (z=1)$. Then, we partition the places of $E_i$ lying over $(z=1)$ into the sets $\{Q : Q \mid \zeta\}$ and $\{P : P \nmid \zeta\}$. Let us denote 
\begin{equation}
    k_i \coloneqq |\{Q : Q \mid \zeta\}| = [E_i : F].
\end{equation}
Constructing a quantum code from a classical code with the multiplication property requires expressing a generator matrix of the classical code in a specific form, as made explicit in the following proposition, which is an extension of the approach of \cite{he2025good}.

\begin{proposition}\label{proposition quantum code punctured matrix}
Let $\ell\geq 4$ be a prime power and let $s\geq 1$ be an integer satisfying $\ell \geq 2s-1$. Let $\zeta$ be a rational place of $F$ such that $\zeta \mid (z=1)$. Suppose $a_i, b_i \geq s$, and fix the parameters $a = \left\lfloor \frac{a_i}{s} \right\rfloor$ and $b = \left\lfloor \frac{b_i}{s} \right\rfloor$. Then, $D_i^\perp >k_i$ and $C_{a,b}^{(i)}$ admits a generator matrix of the form
\vspace{1.5em}
\begin{equation}
\tilde G =
\begin{pNiceMatrix}[columns-width=2.7em]
  I_{k_i} & G_1 \\
  0       & G_0 \\
  \CodeAfter
    \OverBrace[shorten,yshift=3pt]{1-1}{2-1}{Q \mid \zeta}
    \OverBrace[shorten,yshift=3pt]{1-2}{2-2}{P \nmid \zeta}
\end{pNiceMatrix}
\end{equation}
where each $Q \mid \zeta$ and each $P \nmid \zeta$ indexes a column of $\tilde G$, and the block $I_{k_i}$ is the $k_i \times k_i$ identity matrix.
\end{proposition}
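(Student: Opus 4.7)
The proof naturally splits into two parts: establishing the dual distance bound $D_i^\perp \geq k_i$ and then producing a generator matrix in the specified block form. My plan is to derive both by combining the parameter estimates of Proposition \ref{proposition classical code parameters} with the chain of function field extensions $E_0 \subseteq \mathbb{F}_q(w) \subseteq F_0 \subseteq F \subseteq E_i$.

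For the first part, I would first upper-bound $k_i = [E_i : F]$. Using multiplicativity of degrees and the fact that $[\mathbb{F}_q(w):E_0] = \ell - 1$ together with $\mathbb{F}_q(w) \subseteq F$, one gets $[F:E_0] \geq \ell - 1$, and hence
\[
k_i \;=\; \frac{[E_i:E_0]}{[F:E_0]} \;=\; \frac{N_i}{[F:E_0]} \;\leq\; \frac{N_i}{\ell - 1}.
\]
Comparing with the bound $D_i^\perp \geq \frac{\ell + 1}{s(\ell-1)} N_i$ from Proposition \ref{proposition classical code parameters}, the desired inequality $D_i^\perp \geq k_i$ reduces to $\ell + 1 \geq s$, which is a weak consequence of the standing hypothesis $\ell \geq 2s - 1$. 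In fact the resulting inequality is strict, so we even obtain $D_i^\perp \geq k_i + 1$, which is what the second part really needs.

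For the second part, the strict inequality $D_i^\perp > k_i$ implies that no nonzero codeword of $(C_{a,b}^{(i)})^\perp$ has support contained in the $k_i$ positions indexed by $\{Q \mid \zeta\}$; equivalently, the corresponding $k_i$ columns of any generator matrix of $C_{a,b}^{(i)}$ are linearly independent. Since the hypothesis $\ell \geq 2s - 1$ is exactly what ensures the dimension bound $K_i \geq \frac{\ell + 1 - s}{s(\ell - 1)} N_i \geq \frac{N_i}{\ell - 1} \geq k_i$, there are at least $k_i$ rows available, and one can choose $k_i$ of them whose restriction to those columns is an invertible $k_i \times k_i$ matrix. Standard Gaussian elimination—first row-reducing these rows so that their restriction to the $k_i$ columns becomes $I_{k_i}$, and then using them to clear the corresponding entries in the remaining $K_i - k_i$ rows—produces the desired block form with $I_{k_i}$ at the top-left, $G_1$ to its right, $0$ below, and $G_0$ at the bottom-right. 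The only subtlety worth being careful about is that both the linear independence of the $k_i$ columns and the existence of enough rows rely on the same hypothesis $\ell \geq 2s - 1$; this is not really an obstacle but does explain why the proposition cannot afford a weaker bound on $\ell$ in terms of $s$.
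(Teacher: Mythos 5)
Your proposal is correct and takes essentially the same route as the paper's own proof: bound $k_i$ above via the index $[F:E_0]$, invoke the lower bound on $D_i^\perp$ from Proposition \ref{proposition classical code parameters}, deduce $D_i^\perp > k_i$, conclude that any $k_i$ columns of a generator matrix are linearly independent, and finish by row reduction. The only difference is that you use the looser bound $[F:E_0] \geq \ell-1$ (from $\mathbb{F}_q(w) \subseteq F$) rather than splitting into the cases $F = F_0$ (where $[F_0:E_0] = \ell(\ell-1)$) and $F = E_v$ as the paper does, which slightly streamlines the argument without changing its substance.
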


For completeness, the proof of this proposition is deferred to Appendix~\ref{Appendix deferred proofs on quantum code parameters}, as it slightly generalizes the construction of \cite{he2025good}. Here, we allow more freedom in the choice of the place $\zeta$, which provides greater control over the dimension-distance tradeoff of the resulting quantum code.

From this decomposition, we see that any of the first $k_i$ rows of $\tilde G$ corresponds to a function $\tilde g_Q$, where $Q \mid \zeta$. Moreover, evaluating this function on any place $Q' \mid \zeta$ yields
\begin{equation}\label{equation delta function g}
    \tilde g_Q(Q') = \delta_{Q,Q'}.
\end{equation}
To construct the quantum code, we use the framework developed by Krishna and Tillich \cite{Krishna2019} and puncture the matrix $\tilde G$ on the first $k_i$ coordinates. This produces the matrix
\begin{equation}
G = \begin{pmatrix} G_1 \\ G_0 \end{pmatrix}.
\end{equation}
We can now define two codes: $C_0 = \ker G_0$, and $C_1 = \text{rowspan}(G)$. We observe that $C_0^\perp = \text{rowspan}(G_0) \subseteq C_1$, and it is thus possible to define the quantum code
\begin{equation}\label{equation quantum code def}
    \mathcal Q^{(i)}_{\zeta} = \text{CSS}(C_0, C_1),
\end{equation}
where $C_0$ corresponds to the logical $Z$ operators and $C_1$ to the logical $X$ operators. The physical qubits are in one-to-one correspondence with the set $\{P : P \nmid \zeta\}$, which now serves as an index set for the physical qubits.\par

Before discussing the logical operators, let us first observe that $\text{rowspan}(G_0) \cap \text{rowspan}(G_1) = 0$ and that $G_1$ is full rank. This statement is restated in Lemma \ref{lemma generator matrix independent blocks}, in Appendix ~\ref{Appendix deferred proofs on quantum code parameters}, along with a proof, for completeness. A consequence is that the logical qubits are in one-to-one correspondence with the set $\{Q : Q \mid \zeta\}$. Indeed, the non-trivial logical $X$-operators belong to $C_1 \setminus C_0^\perp$, and a basis for these operators is given by the rows of $G_1$. Since the first $k_i$ rows of $\tilde G$ are the evaluations of the functions $\tilde g_Q$, the corresponding rows of $G_1$ can be written as vectors $g_Q$, obtained by puncturing the first $k_i$ coordinates of $\tilde g_Q$. Hence, there is a one-to-one correspondence between the basis of logical operators (rows of $G_1$) and the functions $\tilde g_Q$ with $Q \mid \zeta$.

The number of logical qubits is thus
\begin{equation}\label{equation dimension quantum}
k_i = |\{Q : Q \mid \zeta\}| = [E_i : F],
\end{equation}
and the number of physical qubits is
\begin{equation}
n_i = N_i - k_i.
\end{equation}

The next proposition introduces the standard logical orthonormal basis from CSS code theory.

\begin{proposition}\label{proposition standard basis CSS}
Let the set $(g_Q)_{Q \mid \zeta}$ correspond to the rows of $G_1$, and let $x = (x_Q)_{Q \mid \zeta} \in \mathbb F_q^{k_i}$. An orthonormal basis for $\mathcal Q^{(i)}_{\zeta}$ is given by
\[
\left\{ \overline{\ket{x}} \coloneqq \sum_{g \in \text{rowspan}(G_0)} \ket{ \sum_{Q \mid \zeta} x_Q g_Q + g} : x \in \mathbb F_q^{k_i} \right\},
\]
up to state normalization.
\end{proposition}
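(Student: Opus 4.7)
The claim is essentially the standard CSS construction specialized to this setting, so the plan is to verify each ingredient of the usual recipe.

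\textbf{Step 1: Identify the stabilizer group and code space.} Since $\mathcal Q^{(i)}_{\zeta} = \text{CSS}(C_0, C_1)$ with $C_0^\perp = \text{rowspan}(G_0) \subseteq C_1$, the $X$-type stabilizers are generated by $X^h$ for $h$ ranging over a basis of $C_0^\perp$, and the $Z$-type stabilizers by $Z^h$ for $h$ ranging over a basis of $C_1^\perp$. The code space is the common $+1$ eigenspace, and from the standard theory it has $\mathbb F_q$-dimension $\dim(C_1) - \dim(C_0^\perp) = k_i$.

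\textbf{Step 2: Check that $\overline{\ket{x}}$ lies in the code space.} For the $X$-stabilizers, an element $h \in C_0^\perp = \text{rowspan}(G_0)$ acts by shifting the summation index $g \mapsto g + h$, which just permutes the sum and leaves $\overline{\ket{x}}$ unchanged. For the $Z$-stabilizers, an element $h \in C_1^\perp$ acts by the phase $\omega^{\text{tr}(\langle h, \sum_Q x_Q g_Q + g \rangle)}$; but $\sum_Q x_Q g_Q \in \text{rowspan}(G_1) \subseteq C_1$ and $g \in C_0^\perp \subseteq C_1$, so their sum lies in $C_1$, and $\langle h, \cdot \rangle = 0$ gives phase $1$. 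Hence $\overline{\ket{x}} \in \mathcal Q^{(i)}_{\zeta}$.

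\textbf{Step 3: Orthogonality, completeness, and normalization.} Two kets $\ket{\sum_Q x_Q g_Q + g}$ and $\ket{\sum_Q x'_Q g_Q + g'}$ coincide exactly when $\sum_Q (x_Q - x'_Q) g_Q = g' - g \in \text{rowspan}(G_0)$. Here is where the earlier structural Lemma~\ref{lemma generator matrix independent blocks} does the work: since $\text{rowspan}(G_0) \cap \text{rowspan}(G_1) = 0$ and $G_1$ is of full rank $k_i$, this forces $x = x'$ and then $g = g'$. Consequently the $q^{k_i}$ states $\overline{\ket{x}}$ involve pairwise disjoint sets of $|C_0^\perp|$ orthonormal computational basis vectors each, so they are mutually orthogonal and become unit vectors after dividing by $\sqrt{|C_0^\perp|}$. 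Since $q^{k_i}$ equals $\dim \mathcal Q^{(i)}_{\zeta}$, these states span the code space, yielding an orthonormal basis.

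\textbf{Anticipated obstacle.} Nothing here is technically deep; the only place the argument could fail is in Step 3, where one needs the distinctness of the coset representatives $\sum_Q x_Q g_Q$ modulo $C_0^\perp$. That, however, is precisely the content of the block-independence lemma already recorded in Appendix~\ref{Appendix deferred proofs on quantum code parameters}, which in turn reflects the structural decomposition of $\tilde G$ guaranteed by Proposition~\ref{proposition quantum code punctured matrix}. Once that is invoked, the proposition reduces to a bookkeeping exercise in the standard CSS construction for qudits over $\mathbb F_q$.
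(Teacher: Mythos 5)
The paper itself does not give a proof of this proposition; it is stated as the standard CSS-basis fact, with the heavy lifting already delegated to the observation that $\text{rowspan}(G_0)\cap\text{rowspan}(G_1)=0$ and $G_1$ is full rank (Lemma~\ref{lemma generator matrix independent blocks}). Your write-up supplies exactly the intended argument: $X$-stabilizers from $C_0^\perp$ permute the inner sum, $Z$-stabilizers from $C_1^\perp$ act trivially because $\sum_Q x_Q g_Q + g \in C_1$, and the block-independence lemma is precisely what guarantees the cosets $\sum_Q x_Q g_Q + \text{rowspan}(G_0)$ are pairwise disjoint, giving orthogonality and the dimension count $q^{k_i}$. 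This is correct and is the same route the paper implicitly relies on, so there is nothing to add beyond a small stylistic note: in Step~1, ``the code space \dots has $\mathbb F_q$-dimension $\dim(C_1)-\dim(C_0^\perp)=k_i$'' conflates the $\mathbb F_q$-linear quotient $C_1/C_0^\perp$ with the Hilbert-space dimension; you mean that the logical space has complex dimension $q^{k_i}$, which is what you use in Step~3, so the argument is unaffected.
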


It remains to study the parameters of the family of quantum codes obtained from this procedure. In \cite{he2025good}, the author optimized the rate of the family of quantum codes by fixing $\zeta$ to be a place of $F_0$ in the tower
\[
E_0 \subseteq \mathbb{F}_q(w) \subseteq F_0 \subseteq E_1,
\]
and by partitioning the places of $E_i$ between those lying over an arbitrary fixed place $\zeta$ of $F_0$ and those lying over a different place of $F_0$. More precisely, let $q = \ell^2$ be a fixed power of two such that $\ell \geq 8$, and let $a = \left\lfloor \frac{a_i}{4} \right\rfloor$, and $b = \left\lfloor \frac{b_i}{4} \right\rfloor$. Then, the code $C_{a,b}^{(i)}$ is asymptotically good and satisfies the $3$-multiplication property. Let $\zeta$ be a place of $F_0$ such that $\zeta \mid (z=1)$. Moreover, let $\mathcal Q^{(i)}_\zeta$ be defined as above, by partitioning the places of $E_i$ into the set 
\[
\{Q : Q \mid \zeta\} \quad (\text{logical qubits}) \quad \text{and} \quad \{P : P \nmid \zeta\} \quad (\text{physical qubits}).
\] The author of \cite{he2025good} proved that the family of codes $\mathcal Q^{(i)}_\zeta$ is asymptotically good.\par

Here, we take a slightly more general path, as we aim to prove that an asymptotically good family of quantum codes can be obtained from classical codes with the $m$-multiplication property for $m \geq 3$. We may use Lemma~\ref{lemma generator matrix independent blocks} to derive the following behavior of the parameters of the family of quantum codes.

\begin{theorem}\label{theorem quantum code parameters}
Let $\ell\geq 6$ be a prime power and let $s\geq 3$ be an integer such that $\ell \geq 2s$. Let $\zeta$ be a place of $F$, such that $F \in \mathcal E$ and $F \subset E_i$, with $\zeta \mid (z=1)$. Then, the code $\mathcal Q^{(i)}_{\zeta}$ defined in Equation \eqref{equation quantum code def} has dimension 
\[
k_i = [E_i : F]
\]
and distance 
\[
d_i \geq \min(D_i, D_i^\perp) - k_i.
\] 
In particular, the family of codes $(\mathcal Q^{(i)}_{\zeta})_{i\in \mathbb N}$ is asymptotically good. 
\end{theorem}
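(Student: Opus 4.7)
The plan is to verify the three claims in turn---dimension, distance, and asymptotic goodness---each by relatively short arguments leveraging the explicit block form of the punctured generator matrix given in Proposition \ref{proposition quantum code punctured matrix}.

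For the dimension, I invoke Lemma \ref{lemma generator matrix independent blocks} (referenced in the excerpt), which guarantees that $\mathrm{rowspan}(G_1)$ and $\mathrm{rowspan}(G_0)$ intersect trivially and that $G_1$ has full row rank $k_i$. The CSS dimension formula $\dim C_1 - \dim C_0^\perp$ then reduces to $K_i - (K_i - k_i) = k_i$, matching \eqref{equation dimension quantum}.

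For the distance, I treat the $X$-type and $Z$-type logical weights separately. For $d_X$: every nonzero $v \in C_1$ is the restriction to the $n_i = N_i - k_i$ physical coordinates of a nonzero codeword $\tilde v$ of $C^{(i)}_{a,b}$; since $|\tilde v| \geq D_i$ and at most $k_i$ nonzero entries of $\tilde v$ can be removed by the puncturing, one gets $|v| \geq D_i - k_i$ and hence $d_X \geq D_i - k_i$. For $d_Z$: I identify $\mathrm{rowspan}(G_0)$, under coordinate deletion, with the shortening of $C^{(i)}_{a,b}$ on the first $k_i$ coordinates (this uses the explicit block form of $\tilde G$, since the last $K_i-k_i$ rows of $\tilde G$ vanish on those coordinates). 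Invoking the standard shortening/puncturing duality, $C_0 = (\mathrm{rowspan}(G_0))^\perp$ becomes the puncturing of $(C^{(i)}_{a,b})^\perp$ on those same $k_i$ coordinates, and the same weight-accounting argument yields $d_Z \geq D_i^\perp - k_i$. Combining, $d_i = \min(d_X, d_Z) \geq \min(D_i, D_i^\perp) - k_i$.

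For asymptotic goodness, I fix $F$ in the tower so that $\gamma \coloneq [F:E_0]$ is a constant independent of $i$; for instance $F = \mathbb F_q(w)$ gives $\gamma = \ell - 1$. Then $k_i = N_i/\gamma$ and $n_i = N_i(\gamma - 1)/\gamma$, so the encoding rate $k_i/n_i = 1/(\gamma - 1)$ is a positive constant. Using Proposition \ref{proposition classical code parameters}, a direct check under the hypotheses $\ell \geq 2s$ and $s \geq 3$ shows that both $D_i/N_i \geq 1 - 3/(s(\ell-1))$ and $D_i^\perp/N_i \geq (\ell+1)/(s(\ell-1))$ strictly exceed $1/\gamma = 1/(\ell-1)$; hence $d_i \geq \min(D_i,D_i^\perp) - k_i = \Omega(N_i) = \Omega(n_i)$.

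I expect the only subtle step to be the $Z$-distance argument, specifically the identification of $\mathrm{rowspan}(G_0)$ as (the image under coordinate deletion of) the full shortening of $C^{(i)}_{a,b}$, rather than merely a subcode thereof. Once this equality is verified from the explicit form of $\tilde G$, everything else reduces to the dimension/distance bookkeeping in the CSS construction and to plugging in the classical parameter estimates of Proposition \ref{proposition classical code parameters}.
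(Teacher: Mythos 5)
Your argument is correct, and the $Z$-distance step takes a genuinely different route from the paper's. The paper identifies $C_0^\perp = \mathrm{rowspan}(G_0)$ with the algebraic geometry code $C_{\mathcal L}(R'^{(i)}, H'^{(i)})$ for the modified divisor $H'^{(i)} = H^{(i)} - \sum_{Q\mid\zeta} Q$, and then re-applies the Riemann--Roch distance bound (Theorem~\ref{Theorem standard estimate AG codes}) to this new AG code; this forces it to re-verify both degree conditions $2g(E_i)-2 < \deg H'^{(i)} < N_i - k_i$, and the check $2g(E_i)-2 < \deg H'^{(i)}$ is precisely where the slightly stronger hypothesis $\ell\geq 2s$ (versus $\ell\geq 2s-1$ in Proposition~\ref{proposition quantum code punctured matrix}) is used. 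You instead observe that $\mathrm{rowspan}(G_0)$ is the full shortening of $C^{(i)}_{a,b}$ at the $k_i$ coordinates indexed by $Q\mid\zeta$ (which follows from the block form of $\tilde G$ exactly as you suspected), and then invoke the classical duality between shortening and puncturing to identify $C_0$ with the puncturing of $(C^{(i)}_{a,b})^\perp$; the bound $d(C_0) \geq D_i^\perp - k_i$ then follows from plain weight-accounting, with injectivity guaranteed by $D_i^\perp > k_i$ (already supplied by Proposition~\ref{proposition quantum code punctured matrix}). Your route is a bit leaner, sidestepping the Riemann--Roch verification on the modified divisor; the paper's route makes explicit the function-field meaning of $C_0^\perp$ as evaluations of functions vanishing at the $Q\mid\zeta$, which is conceptually aligned with the rest of the construction. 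One small slip in your asymptotic argument: $F = \mathbb F_q(w)$ is not an allowed choice, since the construction requires $F_0 \subseteq F \subseteq E_i$ (so $\gamma = [F:E_0] \geq [F_0:E_0] = \ell(\ell-1)$, not $\ell-1$); this actually makes the inequality $1/\gamma < \min(D_i, D_i^\perp)/N_i$ easier to satisfy, so your conclusion is unaffected, but the worked example should be $F = F_0$ (rate $\geq 1/(\ell(\ell-1))$) as in the paper.
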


For conciseness, the proof is deferred to Appendix~\ref{Appendix deferred proofs on quantum code parameters}. Notice that here, we use the stricter bound $\ell \geq 2s$, compared to that of Proposition \ref{proposition quantum code punctured matrix} which has $\ell \geq 2s-1$.

\subsection{Parallelizable, addressable logical $\mathsf{C}^{m-1}\mathsf{Z}$ gates}

In this section, we prove that the quantum code $\mathcal Q_\zeta^{(i)}$ of Equation~\ref{equation quantum code def} supports transversal logical multi-control-$Z$ gates that are both addressable and parallelizable, to a certain extent. Let us emphasize that in~\cite{he2025good}, it was proven that for any $\gamma\in \mathbb F_q$, and any triple of logical qudits $Q_0,Q_1,Q_2$ belonging to three different code blocks of $\mathcal Q^{\otimes 3}$, the logical gate $\overline{\mathsf{CCZ}^\gamma[Q_0,Q_1,Q_2]}$ can be implemented by a circuit of physical $\mathsf{CCZ}$ gates of depth~1. 

Here, we extend this result by analyzing the parallelizability of generalized multi-control-$Z$ gates. To this end, let us recall that $\zeta$ is a place of a chosen function field $F\in \mathcal E$ where $F\subset E_i$ and $E_i/F$ is a Galois extension. A \textit{physical} $\mathsf{C}^{m-1}\mathsf{Z}$ gate on qudits corresponding to $P_0,P_1,\dots,P_{m-1}\nmid \zeta$ respectively in code blocks $0,1, 2,\dots,m-1$, is denoted $\mathsf{C}^{m-1}\mathsf{Z}^\gamma[P_0,P_1,\dots,P_{m-1}]$. In addition, a \textit{logical} $\mathsf{C}^{m-1}\mathsf{Z}$ gate on logical qudits $Q_0,Q_1,\dots,Q_{m-1}\mid \zeta$ belonging respectively to code blocks $0,1, 2,\dots,m-1$, is denoted by $\overline{\mathsf{C}^{m-1}\mathsf{Z}^\gamma[Q_0, Q_1,\dots, Q_{m-1}]}$.\par

Recall that for the AG code $C_{a,b}^{(i)}$, the first $k_i$ rows of a generator matrix $\tilde G$, as given in Proposition \ref{proposition quantum code punctured matrix}, are written $\tilde g_Q$ where $Q \mid \zeta$. We now introduce the following notion of \textit{modulation function}, which is central to the extension of the results of~\cite{he2025good}. 

\begin{definition}
Given any set of logical qudits corresponding to a subset $S\subseteq \{Q:Q\mid\zeta\}$, and a vector $\gamma=(\gamma_Q)_{Q\mid \zeta}\in \mathbb F_q^{k_i}$, we define the associated \emph{modulation function} $M$ as
\[
M \coloneqq \sum_{Q\in S}\gamma_Q  (\underline{u}_{Q})^{-1}  \tilde g_Q,
\]
where $\underline{u}=(\underline{u}_P)_{P\in \text{supp}(R^{(i)})}$ is the vector of Proposition \ref{proposition dual AG code}.
\end{definition}

Note that $M$ belongs to the Riemann-Roch space $\mathcal L(H^{(i)})$, since it is defined as a linear combination of functions in $\mathcal L(H^{(i)})$. Using this modulation function, we can define logical multi-control-$Z$ gates that can address specific logical qudits in parallel.

\begin{theorem} \label{theorem: main circuit}
Suppose $a_i, b_i \geq m+1$, and fix the parameters $a = \left\lfloor \frac{a_i}{m+1} \right\rfloor$ and $b = \left\lfloor \frac{b_i}{m+1} \right\rfloor$. Let $\mathcal{Q}^{(i)}_\zeta$ be the quantum code constructed from a generator matrix of $C_{a,b}^{(i)}$, as defined in Equation \eqref{equation quantum code def}. Let $S$ be a set of places $Q$ such that $Q|\zeta$. Then, for any logical state $\ket{\psi} \in (\mathcal{Q}^{(i)}_\zeta)^{\otimes m}$, given any modulation function
\[
M = \sum_{Q\in S}\gamma_Q  (\underline{u}_{Q})^{-1}  \tilde g_Q,
\]
and any $\sigma_1,\sigma_2,\dots,\sigma_{m-1}\in \text{Gal}(E_i/F)$, it holds that
\begin{align}\label{Equation physical circuit}
\prod_{Q\in S}\overline{\mathsf{C}^{m-1}\mathsf{Z}^{\gamma_Q}[Q,\sigma_1(Q),\sigma_2(Q),\dots,\sigma_{m-1}(Q)]} \ket{\psi}
=  
\prod_{P\nmid \zeta}\mathsf{C}^{m-1}\mathsf{Z}^{-\underline{u}_{P} M(P)}[P,\sigma_1(P),\sigma_2(P),\dots,\sigma_{m-1}(P)] \ket{\psi}.
\end{align}
\end{theorem}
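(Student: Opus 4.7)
The plan is to reduce the identity to a comparison of diagonal phases on the orthonormal logical basis $\bigotimes_{j=0}^{m-1}\overline{\ket{x^{(j)}}}$ of Proposition~\ref{proposition standard basis CSS}. Both sides act diagonally in the physical computational basis, so a tensor product of logical basis states will be an eigenvector of the RHS provided the scalar it produces is independent of the stabilizer representative chosen in the coset sum. A typical summand of $\bigotimes_j \overline{\ket{x^{(j)}}}$ has the form $\bigotimes_j \ket{y^{(j)}}$, where $y^{(j)}$ is the evaluation at the places $P\nmid\zeta$ of a function $y^{(j)} = \sum_{Q'\mid\zeta} x^{(j)}_{Q'}\,\tilde g_{Q'} + \tilde g^{(j)}$ in $\mathcal L(H^{(i)})$, with $\tilde g^{(j)}$ a stabilizer function vanishing on every place above $\zeta$. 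From the definition of $\overline{\mathsf{C}^{m-1}\mathsf{Z}^{\gamma_Q}}$ acting on logical basis states, the LHS contributes the phase $\exp\!\bigl(\tfrac{2\pi i}{p}\,\mathrm{tr}\bigl(\sum_{Q\in S}\gamma_Q\,x^{(0)}_Q\,x^{(1)}_{\sigma_1(Q)}\cdots x^{(m-1)}_{\sigma_{m-1}(Q)}\bigr)\bigr)$, so the entire problem reduces to matching this with the phase produced by the RHS.

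The RHS contributes $\exp\!\bigl(-\tfrac{2\pi i}{p}\mathrm{tr}(T)\bigr)$, with
\[
T \;=\; \sum_{P\nmid\zeta}\underline u_P\,M(P)\,\prod_{j=0}^{m-1} y^{(j)}(\sigma_j(P)),
\]
setting $\sigma_0 = \mathrm{id}$. The first manipulation I would perform is to apply Equation~\eqref{equation contravariance} to each factor, rewriting $y^{(j)}(\sigma_j(P)) = \sigma_j^{-1}(y^{(j)})(P)$ so that every evaluation is based at the common point $P$. Since $H^{(i)}$ is $\mathrm{Gal}(E_i/E_0)$-invariant (being a combination of the invariant divisors $A^{(i)}$ and $B^{(i)}$, exactly as used in Proposition~\ref{proposition AG code automorphism}), each $\sigma_j^{-1}(y^{(j)})$ still lies in $\mathcal L(H^{(i)})$, and $M$ does so by construction. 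Evaluating these $m+1$ functions at the $N_i$ places of $R^{(i)}$ therefore produces $m+1$ codewords of $C^{(i)}_{a,b}$.

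The core step is then to invoke the $m$-multiplication property of Lemma~\ref{lemma multiplication property Stichtenoth}, which applies because $s = m+1$: summed over \emph{all} places $P\in\mathrm{supp}(R^{(i)})$, the quantity $\underline u_P M(P)\prod_j \sigma_j^{-1}(y^{(j)})(P)$ vanishes. Isolating the contribution from $Q\mid\zeta$ yields $T = -\sum_{Q\mid\zeta}\underline u_Q\,M(Q)\prod_j y^{(j)}(\sigma_j(Q))$, and on places above $\zeta$ everything simplifies drastically. By Equation~\eqref{equation delta function g}, $M(Q) = \gamma_Q(\underline u_Q)^{-1}$ when $Q \in S$ and vanishes otherwise; and since $\sigma_j \in \mathrm{Gal}(E_i/F)$ permutes the places above $\zeta$ among themselves while the stabilizer functions $\tilde g^{(j)}$ vanish on all such places, I get $y^{(j)}(\sigma_j(Q)) = x^{(j)}_{\sigma_j(Q)}$. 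Substituting yields $T = -\sum_{Q\in S}\gamma_Q\,x^{(0)}_Q\,x^{(1)}_{\sigma_1(Q)}\cdots x^{(m-1)}_{\sigma_{m-1}(Q)}$, matching the LHS phase.

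The main obstacle is not conceptual but organizational: I must carefully verify that (i) $\mathrm{Gal}(E_i/F)$ preserves both the partition of places into physical versus logical and the Riemann–Roch space $\mathcal L(H^{(i)})$, (ii) the contravariance identity is applied consistently when commuting evaluation with the Galois action, and (iii) the RHS phase turns out to be independent of the stabilizer representatives $\tilde g^{(j)}$. The last point is precisely what the reduction to the sum over $Q\mid\zeta$ guarantees, because on those places the stabilizer functions vanish and only the logical coordinates $x^{(j)}$ survive; this is the structural reason the multiplication property is the right hypothesis to demand of the underlying classical code.
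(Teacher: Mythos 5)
Your proposal is correct and follows essentially the same route as the paper's own proof: both reduce to computational‑basis summands, lift each physical component to a function in $\mathcal{L}(H^{(i)})$, apply the contravariance identity together with Galois‑invariance of $H^{(i)}$ to move all evaluations to a common place $P$, invoke the $m$-multiplication property (Lemma~\ref{lemma multiplication property Stichtenoth}, formalized in the paper as Corollary~\ref{Corollary multiplication property}) to convert the sum over $P\nmid\zeta$ into a sum over $Q\mid\zeta$, and then use the delta property $\tilde g_{Q'}(Q)=\delta_{Q',Q}$ to simplify $M(Q)$ and $y^{(j)}(\sigma_j(Q))$ to the logical coordinates.
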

For clarity, the proof of this theorem is deferred to Appendix~\ref{Appendix proof of main theorem}. It relies on the following direct corollary of Lemma~\ref{lemma multiplication property Stichtenoth}.

\begin{corollary}\label{Corollary multiplication property}
Fixing the parameters $a = \left\lfloor \frac{a_i}{m+1} \right\rfloor$ and $b = \left\lfloor \frac{b_i}{m+1} \right\rfloor$, the code $C_{a,b}^{(i)}$ has the $m$-multiplication property with the vector $\underline{u}$ of Proposition~\ref{proposition dual AG code}. That is, for any $m+1$ functions $\tilde f^0,\dots,\tilde f^{m}\in \mathcal L(H^{(i)})$, we have
\begin{equation}
\sum_{P\nmid \zeta}\underline{u}_P \tilde f^0(P)\tilde f^1(P)\dots \tilde f^{m}(P)
= - \sum_{Q|\zeta}\underline{u}_{Q}\tilde f^0(Q)\tilde f^1(Q)\dots \tilde f^{m}(Q).
\end{equation}
\end{corollary}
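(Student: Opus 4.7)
The plan is to get this by direct unwinding of Lemma \ref{lemma multiplication property Stichtenoth}, essentially a bookkeeping translation between the dual-containment formulation of the multiplication property and the concrete bilinear identity stated in the corollary. I would specialize the lemma to $s = m+1$, which is exactly the regime corresponding to the floors $a = \lfloor a_i/(m+1)\rfloor$ and $b = \lfloor b_i/(m+1)\rfloor$ in the corollary's hypothesis; the lemma then gives the $\widetilde m$-multiplication property for every $\widetilde m \le m$, and in particular for $\widetilde m = m$.

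Next, I would rewrite the containment $\underline{u} \star (C_{a,b}^{(i)})^{\star m} \subseteq (C_{a,b}^{(i)})^\perp$ as the vanishing of inner products. Concretely, for any $m$ codewords $c^0,\ldots,c^{m-1} \in C_{a,b}^{(i)}$ and any $c^m \in C_{a,b}^{(i)}$,
\[
\langle \underline{u}\star c^0\star\cdots\star c^{m-1},\, c^m\rangle \;=\; \sum_{P\in\mathrm{supp}(R^{(i)})}\underline{u}_P\, c^0_P\, c^1_P\cdots c^m_P \;=\; 0.
\]
Applying this to the codewords $c^j = (\tilde f^j(P))_{P\in\mathrm{supp}(R^{(i)})}$, which lie in $C_{a,b}^{(i)}$ since each $\tilde f^j \in \mathcal{L}(H^{(i)})$, produces the sum identity
\[
\sum_{P\in\mathrm{supp}(R^{(i)})}\underline{u}_P\, \tilde f^0(P)\cdots \tilde f^m(P) \;=\; 0.
\]

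To conclude, I would use the partition of the support of $R^{(i)}$ into the disjoint union $\{Q : Q\mid \zeta\} \sqcup \{P : P\nmid \zeta\}$ that was fixed in the construction of the quantum code just before Proposition \ref{proposition quantum code punctured matrix}. Splitting the sum along this partition and transposing one of the two pieces to the right-hand side yields exactly the claimed equality. The only hypotheses that need checking are $a_i,b_i \ge s = m+1$ for the lemma to apply, which is the assumption inherited from the enclosing Theorem \ref{theorem: main circuit}; no real obstacle arises, since this is essentially just a restatement of the multiplication property adapted to the specific partition of places relevant to the quantum code.
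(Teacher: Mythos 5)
Your proposal is correct and matches the paper's (implicit) derivation: the paper simply declares this a direct corollary of Lemma~\ref{lemma multiplication property Stichtenoth}, and your unwinding — specialize the lemma to $s=m+1$, translate the containment $\underline{u}\star(C_{a,b}^{(i)})^{\star m}\subseteq (C_{a,b}^{(i)})^{\perp}$ into the vanishing of $\sum_{P\in\operatorname{supp}R^{(i)}}\underline{u}_P\tilde f^0(P)\cdots\tilde f^m(P)$ applied to the evaluation codewords of $\tilde f^j\in\mathcal L(H^{(i)})$, then split the sum over the partition $\{Q:Q\mid\zeta\}\sqcup\{P:P\nmid\zeta\}$ — is exactly that direct step. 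No gaps.
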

Using that the code $C_{a,b}^{(i)}$ also has the $\widetilde m$-multiplication for $1<\widetilde m\leq m$, the result of Theorem \ref{theorem: main circuit} can be extended straightforwardly to any product of $(\widetilde m-1)$-control-$Z$ gates.

\subsection{Circuit advantage}

In this section, we estimate the advantage of using a circuit of logical multi-control-$Z$ gates for which subsets of logical qudits can be addressed in parallel, using the quantum code of Equation \eqref{equation quantum code def} and the multi-control-$Z$ gates of Theorem~\ref{theorem: main circuit}.

Given a circuit $\mathcal C$ of multi-control-$Z$ gates of depth $L$ acting on $k$ ideal qudits, and a fault-tolerant circuit $\bar{\mathcal C}$ of depth $\bar L$ simulating $\mathcal C$ on the logical space of a quantum code of dimension $k$, we recall that the \emph{depth overhead} of $\bar{\mathcal C}$ is defined as the ratio $\bar L / L$. We now prove the following result.

\begin{theorem}\label{Theorem depth scaling}
Let $a_i,b_i,m\geq 1$ be integers such that $a_i, b_i \geq m+1$, and let $a = \left\lfloor \frac{a_i}{m+1} \right\rfloor$ and $b = \left\lfloor \frac{b_i}{m+1} \right\rfloor$. Let $\mathcal{Q}^{(i)}_\zeta$ be the quantum code constructed by puncturing a generator matrix of $C_{a,b}^{(i)}$, as defined in Equation \eqref{equation quantum code def}. Then, the depth of any logical circuit $\bar{\mathcal C}$ of $\overline{\mathsf{C}^{m-1}\mathsf{Z}}$ gates on the quantum code $(\mathcal Q^{(i)}_{\zeta})^{\otimes m}$ is upper bounded by $k_i^{m-1}$.
\end{theorem}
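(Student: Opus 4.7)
The plan is to exploit two facts: (i) all logical $\mathsf{C}^{m-1}\mathsf{Z}$ gates are diagonal in the computational basis, hence commute pairwise, so any logical circuit $\bar{\mathcal C}$ of such gates can be freely reorganised into parallel layers; and (ii) Theorem~\ref{theorem: main circuit} implements, in a single logical layer, any product of interblock gates whose qudits share the same $(m-1)$-tuple of Galois shifts. The upper bound will follow from counting these shift tuples.

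First I would recall the regularity of the Galois action on the logical qudits. The extension $E_i/F$ is Galois of degree $k_i=[E_i:F]$, and the place $\zeta$ lies above $(z=1)$, which splits completely in $E_i/E_0$ (item~4 of Theorem~\ref{Theorem tower}); consequently $\zeta$ itself splits completely in $E_i/F$, giving exactly $k_i$ rational places $\{Q:Q\mid\zeta\}$. The group $\text{Gal}(E_i/F)$ of order $k_i$ acts transitively on this set of size $k_i$, so the point stabilisers are trivial and the action is \emph{simply transitive}. Fixing a base logical qudit $Q^{\star}\mid\zeta$ in each of the $m$ code blocks, every logical qudit $Q\mid\zeta$ can be written uniquely as $\sigma(Q^{\star})$ for some $\sigma\in\text{Gal}(E_i/F)$.

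Next I would assign a \emph{signature} to each gate of $\bar{\mathcal C}$. An interblock gate $\overline{\mathsf{C}^{m-1}\mathsf{Z}^{\gamma}[Q_0,Q_1,\dots,Q_{m-1}]}$ involves one logical qudit per block; by simple transitivity there exist unique $\sigma_1,\dots,\sigma_{m-1}\in\text{Gal}(E_i/F)$ with $Q_j=\sigma_j(Q_0)$ for $j=1,\dots,m-1$, and I attach the signature $(\sigma_1,\dots,\sigma_{m-1})$ to this gate. Using commutativity I then partition the gates of $\bar{\mathcal C}$ according to their signature, producing at most $|\text{Gal}(E_i/F)|^{m-1}=k_i^{m-1}$ groups. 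Inside the group with signature $(\sigma_1,\dots,\sigma_{m-1})$ the remaining gates all have the form $\overline{\mathsf{C}^{m-1}\mathsf{Z}^{\gamma_{Q_0}}[Q_0,\sigma_1(Q_0),\dots,\sigma_{m-1}(Q_0)]}$ indexed by a subset $S$ of $\{Q_0:Q_0\mid\zeta\}$ with coefficients $\gamma_{Q_0}\in\mathbb F_q$. Theorem~\ref{theorem: main circuit} now applies verbatim, with modulation function $M=\sum_{Q_0\in S}\gamma_{Q_0}(\underline{u}_{Q_0})^{-1}\tilde g_{Q_0}$, and realises this entire group as a single depth-$1$ logical layer (implemented by a depth-$1$ physical circuit of $\mathsf{C}^{m-1}\mathsf{Z}$ gates). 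Concatenating the layers across signatures gives the bound $k_i^{m-1}$.

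The only subtle step is the passage from transitivity to simple transitivity, which is what forces the number of signatures to be exactly $k_i^{m-1}$ rather than $k_i^m$: without the equality $|\text{Gal}(E_i/F)|=k_i$ one would pick up an extra factor from non-trivial stabilisers. The remaining ingredients are immediate, since commutativity of diagonal gates is automatic and Theorem~\ref{theorem: main circuit} already supplies the depth-$1$ parallel realisation within each signature class.
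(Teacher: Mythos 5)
Your proposal is correct and takes essentially the same route as the paper: both partition the logical gates of $\bar{\mathcal C}$ by their Galois-shift signature $(\sigma_1,\dots,\sigma_{m-1})\in\mathrm{Gal}(E_i/F)^{m-1}$, apply Theorem~\ref{theorem: main circuit} to realise each signature class in a single depth-one physical layer, and bound the number of layers by $|\mathrm{Gal}(E_i/F)|^{m-1}=k_i^{m-1}$. Your explicit appeal to commutativity of diagonal gates and to the simple transitivity of the Galois action usefully spells out steps the paper leaves implicit, though the closing remark that simple transitivity is what prevents a $k_i^{m}$ bound is a bit off — the count of signatures is $|\mathrm{Gal}(E_i/F)|^{m-1}$ in any case, and the equality $|\mathrm{Gal}(E_i/F)|=k_i$ is automatic from $E_i/F$ being Galois of degree $k_i$ rather than an extra hypothesis that needs guarding.
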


This should be compared with the all-to-all inter-block $\mathsf{C}^{m-1}\mathsf{Z}$ circuit on $m$ blocks of $k_i$ physical qudits, which has a minimal depth of exactly $k_i^{m-1}$, leading to a constant depth overhead for the all-to-all logical interblock $\overline{\mathsf{C}^{m-1}\mathsf{Z}}$ circuit.\par
Fix $s\coloneqq m+1\geq3 $ and let $\ell \geq 6$ be a prime power such that $\ell\geq 2s$. Combining Theorem \ref{theorem quantum code parameters}, with $a = \left\lfloor \frac{a_i}{s} \right\rfloor$ and $b = \left\lfloor \frac{b_i}{s} \right\rfloor$, and Theorem \ref{Theorem depth scaling}, we directly obtain a proof of Theorem \ref{Theorem, main} for circuits of $\overline{\mathsf{C}^{m-1}\mathsf{Z}}$ gates. Given any $1< \widetilde m\leq m$, it is immediate to see that the result also holds for circuits of $(\widetilde m-1)$-control-$Z$ gates on sub-codes of the form $(\mathcal Q^{(i)}_{\zeta})^{\otimes \widetilde m}$, with the depth upper bound $O(k_i^{\widetilde m-1})$. These depths can also be lowered by optimizing specific multi-control-$Z$ circuits.\par

Throughout, we assume that the conditions of Theorem \ref{Theorem depth scaling} are satisfied. We now make the following observations, clarifying the advantages of applying the circuit of Theorem~\ref{theorem: main circuit} to $m$ blocks of the quantum code $\mathcal Q^{(i)}_{\zeta}$.

\begin{lemma}
The circuit of Theorem~\ref{theorem: main circuit} is of depth one and can target any $m$-tuple of logical qudits, each belonging to a distinct code block of $(\mathcal Q^{(i)}_{\zeta})^{\otimes m}$.
\end{lemma}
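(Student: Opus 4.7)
The lemma bundles two independent claims: that the physical circuit produced by Theorem~\ref{theorem: main circuit} runs in a single time step, and that the induced logical gate can be made to target any prescribed $m$-tuple of logical qudits lying in distinct code blocks. My plan is to handle them in this order, each reducing to a structural property of $\text{Gal}(E_i/F)$.

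For the depth-one claim, I would first note that the right-hand side of Equation~\eqref{Equation physical circuit} is a product of $\mathsf{C}^{m-1}\mathsf{Z}$ gates, each diagonal in the computational basis, so they pairwise commute. It then remains only to verify that they touch pairwise disjoint $m$-tuples of physical qudits. Here I would use that any $\sigma_j \in \text{Gal}(E_i/F)$ fixes $F$ pointwise and therefore fixes the place $\zeta$ of $F$; consequently $\sigma_j$ permutes both fibres $\{Q : Q \mid \zeta\}$ and $\{P : P \nmid \zeta\}$. Hence, for each fixed $j$, the map $P \mapsto \sigma_j(P)$ is a bijection of the physical-qudit index set $\{P : P \nmid \zeta\}$, so as $P$ varies the qudits $(P, \sigma_1(P), \ldots, \sigma_{m-1}(P))$ trace out pairwise disjoint $m$-tuples across the $m$ code blocks. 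The circuit is therefore executable in a single parallel step.

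For the addressability claim, I would fix an arbitrary target tuple $(Q_0, Q_1, \ldots, Q_{m-1})$ with $Q_j \mid \zeta$ and $Q_j$ residing in the $j$-th code block. Since $E_i/F$ is Galois, $\text{Gal}(E_i/F)$ acts transitively on the places of $E_i$ above $\zeta$, so for each $j \in \{1, \ldots, m-1\}$ I may pick $\sigma_j \in \text{Gal}(E_i/F)$ satisfying $\sigma_j(Q_0) = Q_j$. I would then specialize Theorem~\ref{theorem: main circuit} to $S = \{Q_0\}$, the vector $\gamma$ supported only at $Q_0$ with arbitrary value $\gamma_{Q_0} \in \mathbb F_q$, and these $\sigma_j$'s. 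The left-hand side of Equation~\eqref{Equation physical circuit} then collapses to the single logical gate $\overline{\mathsf{C}^{m-1}\mathsf{Z}^{\gamma_{Q_0}}[Q_0, Q_1, \ldots, Q_{m-1}]}$, which is exactly the requested addressing.

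I do not foresee any real obstacle. The only nontrivial ingredient is the transitivity of $\text{Gal}(E_i/F)$ on the fibre over $\zeta$, which is standard for Galois extensions of algebraic function fields; the rest is bookkeeping, essentially verifying that diagonal gates commute and that elements of $\text{Gal}(E_i/F)$ preserve the partition $\{Q : Q \mid \zeta\} \sqcup \{P : P \nmid \zeta\}$ of the fibre above $(z=1)$.
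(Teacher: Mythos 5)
Your proof is correct and takes essentially the same approach as the paper's: injectivity of each $\sigma_j$ on the fibre $\{P : P \nmid \zeta\}$ gives disjoint gate supports (hence depth one), and transitivity of $\text{Gal}(E_i/F)$ on $\{Q : Q \mid \zeta\}$ gives addressability. Your write-up is a bit more explicit than the paper's terse version, in particular noting that $\sigma_j$ fixes $\zeta$ and therefore preserves the partition into logical and physical place indices, but the underlying argument is identical.
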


\begin{proof}
For the first part of the lemma, given $P, P' \nmid \zeta$ and $\sigma \in \text{Gal}(E_i/F)$, if $\sigma(P) = \sigma(P')$, then $P = P'$.

For the second part, given any pair of places $Q,Q'|\zeta$, there exists an element $\sigma \in \text{Gal}(E_i/F)$ such that $Q' = \sigma(Q)$. This follows from the transitivity of $\text{Gal}(E_i/F)$ on the places above any given rational place of $F$ lying above the place $(z=1)$ in $\mathbb F_q(z)$. Accordingly, if we let $Q_0,Q_1,\dots,Q_{m-1}$ be any tuple of places and $\sigma_1,\dots,\sigma_{m-1}$ the corresponding elements such that $\sigma_i(Q_0) = Q_i$, it follows that
\[
\overline{\mathsf{C}^{m-1}\mathsf{Z}^{\gamma_Q}[Q_0,Q_1,\dots,Q_{m-1}]} 
= 
\overline{\mathsf{C}^{m-1}\mathsf{Z}^{\gamma_Q}[Q,\sigma_1(Q),\dots,\sigma_{m-1}(Q)]}.
\]
\end{proof}
Let $\Gamma$ denote the Cartesian product of $(m-1)$ copies of the group $\text{Gal}(E_i/F)$:
\[
\Gamma \coloneqq \text{Gal}(E_i/F) \times \text{Gal}(E_i/F) \times \dots \times \text{Gal}(E_i/F),
\]
and let an element be written in bold as $\boldsymbol{\sigma} = (\boldsymbol{\sigma}_i)_{i\in [m-1]} \in \Gamma$. For each $\boldsymbol{\sigma}\in \Gamma$, we can associate a set of logical qudits $S^{\boldsymbol{\sigma}}$ (a set of places, each referring to a logical qudit), a vector $\gamma^{\boldsymbol{\sigma}}$, and a modulation function 
    \[
    M^{\boldsymbol{\sigma}} \coloneqq \sum_{Q\in S^{\boldsymbol{\sigma}}} \gamma^{\boldsymbol{\sigma}}_Q (\underline{u}_Q)^{-1} \tilde g_Q.
    \]
We then obtain the following result.

\begin{proposition}\label{proposition logical circuit}
For any logical circuit $\bar{\mathcal C}$ of inter-block multi-control-$Z$ gates, there exists a subset $\overline{\Gamma}\subseteq \Gamma$, sets $S^{\boldsymbol{\sigma}}$, and vectors $\gamma^{\boldsymbol{\sigma}}$ associated to each $\boldsymbol{\sigma}\in\overline{\Gamma}$, such that
\[
\bar{\mathcal C} 
= 
\prod_{\boldsymbol{\sigma}\in\overline{\Gamma}} 
\prod_{Q\in S^{\boldsymbol{\sigma}}}
\overline{\mathsf{C}^{m-1}\mathsf{Z}^{\gamma^{\boldsymbol{\sigma}}_Q}[Q,\boldsymbol{\sigma}_1(Q),\boldsymbol{\sigma}_2(Q),\dots,\boldsymbol{\sigma}_{m-1}(Q)]},
\]
and $\bar{\mathcal C}$ can be implemented by a physical circuit of depth upper bounded by $|\overline{\Gamma}|$.
\end{proposition}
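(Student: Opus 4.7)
The plan is to reorganise the logical circuit $\bar{\mathcal C}$ into commuting ``batches'' indexed by elements $\boldsymbol{\sigma}\in\Gamma$, one batch per pattern of displacement between code blocks, and then to realise each batch in one depth layer by invoking Theorem~\ref{theorem: main circuit}. The starting observation is that every $\mathsf{C}^{m-1}\mathsf{Z}$ gate, physical or logical, is diagonal in the computational basis, so all gates appearing in $\bar{\mathcal C}$ mutually commute and may be reordered at will; moreover, since $\mathrm{tr}(x_0\cdots x_{m-1})$ is symmetric in its arguments, the order of the $m$ qudits of a given gate is immaterial. This gives the freedom to rewrite $\bar{\mathcal C}$ in whatever grouped form we like, provided the multiset of gates is preserved.

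Next, for each logical gate $\overline{\mathsf{C}^{m-1}\mathsf{Z}^{\gamma}[Q_0,Q_1,\dots,Q_{m-1}]}$ of $\bar{\mathcal C}$, with $Q_j$ in code block $j$, I would use the transitivity of $\mathrm{Gal}(E_i/F)$ on the places above $\zeta$ (already invoked in the preceding lemma) to select $\sigma_j\in\mathrm{Gal}(E_i/F)$ with $\sigma_j(Q_0)=Q_j$. The choice is in fact unique because the decomposition group at any rational place lying over the completely split rational place $\zeta$ is trivial, making the action simply transitive; this assigns to each gate a well-defined pattern $\boldsymbol{\sigma}=(\sigma_1,\dots,\sigma_{m-1})\in\Gamma$. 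I then let $\overline{\Gamma}\subseteq\Gamma$ be the set of patterns that actually occur, let $S^{\boldsymbol{\sigma}}$ be the set of base points $Q_0$ of gates with pattern $\boldsymbol{\sigma}$, and let $\gamma^{\boldsymbol{\sigma}}_Q$ be the sum of coefficients of all gates of $\bar{\mathcal C}$ with base point $Q$ and pattern $\boldsymbol{\sigma}$; additivity is legitimate because $\overline{\mathsf{C}^{m-1}\mathsf{Z}^{\gamma_1}}\cdot\overline{\mathsf{C}^{m-1}\mathsf{Z}^{\gamma_2}}=\overline{\mathsf{C}^{m-1}\mathsf{Z}^{\gamma_1+\gamma_2}}$ on a common tuple. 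The resulting product expression matches the one in the statement.

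Finally, for each $\boldsymbol{\sigma}\in\overline{\Gamma}$, I would form the associated modulation function $M^{\boldsymbol{\sigma}}=\sum_{Q\in S^{\boldsymbol{\sigma}}}\gamma^{\boldsymbol{\sigma}}_Q(\underline{u}_Q)^{-1}\tilde g_Q$ and apply Theorem~\ref{theorem: main circuit} to realise the corresponding batch as the product over $P\nmid\zeta$ of physical gates $\mathsf{C}^{m-1}\mathsf{Z}^{-\underline{u}_P M^{\boldsymbol{\sigma}}(P)}[P,\boldsymbol{\sigma}_1(P),\dots,\boldsymbol{\sigma}_{m-1}(P)]$. The step I expect to be least automatic, and the one that secures depth~$1$ per batch, is the disjointness check: since each $\boldsymbol{\sigma}_j$ is a bijection on the places of $E_i$, the tuples $(P,\boldsymbol{\sigma}_1(P),\dots,\boldsymbol{\sigma}_{m-1}(P))$ and $(P',\boldsymbol{\sigma}_1(P'),\dots,\boldsymbol{\sigma}_{m-1}(P'))$ share no coordinate when $P\ne P'$, so the whole batch fits in one physical time step. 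Concatenating across $\boldsymbol{\sigma}\in\overline{\Gamma}$ yields a physical implementation of $\bar{\mathcal C}$ of depth at most $|\overline{\Gamma}|$, completing the argument.
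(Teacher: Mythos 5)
Your proof is correct and follows the same approach the paper intends: group the gates by displacement pattern $\boldsymbol{\sigma}$ and realize each group in one layer via Theorem~\ref{theorem: main circuit}. The paper's own proof is essentially a one-liner that takes the decomposition for granted and only counts layers, whereas you actually establish the decomposition, so your writeup fills genuine gaps. In particular, you correctly note: (i) commutativity of diagonal gates licenses the regrouping; (ii) the additivity $\overline{\mathsf{C}^{m-1}\mathsf{Z}^{\gamma_1}}\cdot\overline{\mathsf{C}^{m-1}\mathsf{Z}^{\gamma_2}}=\overline{\mathsf{C}^{m-1}\mathsf{Z}^{\gamma_1+\gamma_2}}$ on a common tuple merges coefficients within a batch; and (iii) since each $\boldsymbol{\sigma}_j$ is a bijection, gates within a batch are pairwise disjoint, giving depth one per batch — this disjointness is the same fact the paper establishes in the preceding lemma. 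Your observation that the action of $\mathrm{Gal}(E_i/F)$ on the places above $\zeta$ is simply transitive (trivial decomposition groups, since $(z=1)$ splits completely) is correct and gives a canonical pattern assignment, though the proposition only asserts existence so any choice of $\sigma_j$ with $\sigma_j(Q_0)=Q_j$ would do. The remark about symmetry of $\mathrm{tr}(x_0\cdots x_{m-1})$ is harmless but not needed here, since the block structure already fixes the order of the arguments.
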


\begin{proof}
The logical circuit $\bar{\mathcal C}$ can be implemented by successively applying the physical circuit
\[
\prod_{P\nmid \zeta} 
\mathsf{C}^{m-1}\mathsf{Z}^{-\underline{u}_{P} M^{\boldsymbol{\sigma}}(P)}
[P,\boldsymbol{\sigma}_1(P),\boldsymbol{\sigma}_2(P),\dots,\boldsymbol{\sigma}_{m-1}(P)]
\]
of Theorem~\ref{theorem: main circuit}, which is of depth one. It follows that $\bar{\mathcal C}$ can be implemented by a circuit of depth $|\overline{\Gamma}|$.
\end{proof}

We are now ready for the proof of Theorem~\ref{Theorem depth scaling}.

\begin{proof}[Proof of Theorem~\ref{Theorem depth scaling}]
The depth of any logical circuit $\bar{\mathcal C}$ of inter-block multi-control-$Z$ gates on the quantum code $\mathcal Q^{(i)}_{\zeta}$ is upper bounded by $|\overline{\Gamma}|$, which is itself equal to 
\[
|\text{Gal}(E_i/F)|^{m-1} = k_i^{m-1}.
\]
\end{proof}

The bound of Theorem \ref{Theorem depth scaling} can be lowered with prior knowledge of the specific circuit ${\mathcal C}$ under consideration.\par

A dense multi-control-$Z$ circuit can be obtained from the all-to-all circuit, which has depth $k^{m-1}$, by removing a subset of its gates. However, as long as the average qubit connectivity (the number of $\mathsf{C}^{m-1}\mathsf{Z}$ gates per qubit) remains high, e.g. scaling linearly with $ k$, rescheduling the circuit cannot substantially reduce the depth. Therefore, in the regime of dense circuits, our depth bound $\Theta ( k^{m-1})$ is optimal.\par

In contrast, a sparse physical $\mathsf{C}^{m-1}\mathsf{Z}$ circuit $\mathcal C$, i.e one with low qubit connectivity, may be rescheduled to have a depth significantly lower than $k^{m-1}$. In that case, we may wonder whether a similar depth reduction is also possible for its logical implementation $\overline{\mathcal C}$. It turns out that the depth of $\overline{\mathcal C}$ can also be optimized by exploiting the code structure. For instance, given a circuit $\mathcal C$ of depth $L$ containing $k$ different $\mathsf{C}^{m-1}\mathsf{Z}$ gates, encoding induces a fault-tolerant circuit of the form $\bar{\mathcal C}=\prod_{\boldsymbol{\sigma}\in\overline{\Gamma}} \prod_{Q\in S^{\boldsymbol{\sigma}}} \overline{\mathsf{C}^{m-1}\mathsf{Z}^{\gamma^{\boldsymbol{\sigma}}_Q} [Q,\boldsymbol{\sigma}_1(Q), \boldsymbol{\sigma}_2(Q),\dots,\boldsymbol{\sigma}_{m-1}(Q)]}$, as stated in Proposition \ref{proposition logical circuit}. To minimize the depth of the circuit, one seeks to reduce the number of terms of the outer product by adapting the encoding to the specific connectivity pattern of $\mathcal C$.\par

For example, in the worst case there are $\Omega(k)$ different elements $\boldsymbol{\sigma}\in\overline{\Gamma}$ contributing to the outer products, yielding a logical circuit of depth $\Omega (k)$. However, for a large class of circuits with structured connectivity, it may be possible to adapt the encoding so that the number of elements  $\boldsymbol{\sigma}\in\overline{\Gamma}$ in the product becomes constant. Moreover, when a large circuit is composed of several routines of the form $\mathcal C$, each involving  a different combination of logical qubits, we may interleave successive routines with suitable $\mathsf{SWAP}$ operations to permute the logical qubit labels. This allows each routine to act on a locally favorable ordering, thereby reducing its depth overhead to a constant.

\newpage
\appendix

\section{Appendix}

\subsection{Deferred proofs on AG code properties}\label{Appendix deferred proofs on code parameters}

\begin{proof}[Proof of Lemma \ref{lemma multiplication property Stichtenoth}]
From our review of AG codes, we infer the following. Suppose we have natural numbers $a' \geq a$ and $b' \geq b$. Let $H'^{(i)} = a' A^{(i)} + b' B^{(i)}$. Then $H'^{(i)} \geq H^{(i)}$, and consequently
\[
C^{(i)}_{a,b} \subseteq C^{(i)}_{a',b'}.
\]
Moreover, given any two functions $f,g \in \mathcal{L}(H^{(i)})$, their product $fg$ belongs to $\mathcal{L}(2H^{(i)})$. It follows that
\[
(C^{(i)}_{a,b})^{ \star m} \subseteq C^{(i)}_{ma,mb} \subseteq C^{(i)}_{a_i - a,b_i - b},
\]
where the last inclusion follows from our choice of $a$ and $b$. Lastly, since $\underline{u}$ has only non-zero coordinates, we have
\[
\underline{u}  \star  (C^{(i)}_{a,b})^{ \star m} \subseteq \underline{u}  \star  C^{(i)}_{a_i - a,b_i - b} = (C^{(i)}_{a,b})^\perp,
\]
namely, the code has the $m$-multiplication property.

Furthermore, let $f = 1$ be the constant function. Since the divisor $H^{(i)}$ is positive, we have $(f) + H^{(i)} = H^{(i)} \geq 0$, and hence $f \in \mathcal{L}(H^{(i)})$. In other words, the all-one vector belongs to $C^{(i)}_{a,b}$. By Lemma~\ref{lemma:multiplication-property}, we conclude that $C^{(i)}_{a,b}$ has the $\widetilde m$-multiplication property for every $\widetilde m \leq m$.
\end{proof}

\begin{proof}[Proof of Proposition \ref{proposition classical code parameters}]
Let us first prove the degree bound $2g(E_i)-2< \deg H^{(i)}<N_i$. Indeed, for the upper bound, we have
\begin{align*}
    a\deg A^{(i)} + b\deg B^{(i)} &\leq \frac{a_i}{s}\deg A^{(i)} + \frac{b_i}{s}\deg B^{(i)}\\
    &\leq \frac{2(e_0^{(i)}-1)}{s}\deg A^{(i)} + \frac{(\ell-1)e_\infty^{(i)}-2}{s}\deg B^{(i)}\\
    &< \frac{2e_0^{(i)}}{s}\deg A^{(i)} + \frac{(\ell-1)e_\infty^{(i)}}{s}\deg B^{(i)}\\
    &\leq \frac{\ell +1}{s}[E_i:\mathbb F_q(w)] = \frac{(\ell+1)N_i}{s(\ell-1)} \\
    &\leq N_i,
\end{align*}
where in the fourth line we have used item \textit{6.} of Theorem \ref{Theorem tower}, namely that $e_0^{(i)}\deg A^{(i)} = e_\infty^{(i)}\deg B^{(i)} = [E_i:\mathbb{F}_q(w)]$. Moreover, since $\mathbb{F}_q(w)/E_0$ is an extension of degree $\ell-1$, we obtain $[E_i:\mathbb{F}_q(w)] = \frac{N_i}{\ell-1}$. In the fifth line, we have used that $s\geq 2$ and $\ell\geq 4$.\par

We now prove the lower inequality $2g(E_i)-2< \deg H^{(i)}$. First, note that
$$\deg H^{(i)} - 2g(E_i) + 2 = (a+2)\deg A^{(i)} + (b+2)\deg B^{(i)} - 2[E_i:\mathbb{F}_q(w)] ,$$
and therefore,
\begin{align*}
    \deg H^{(i)} - 2g(E_i) + 2 &> \left(\frac{a_i}{s}+1\right)\deg A^{(i)} + \left(\frac{b_i}{s}+1\right)\deg B^{(i)} - 2[E_i:\mathbb{F}_q(w)] \\
    &= \left(\frac{2(e_0^{(i)}-1)}{s}+1\right)\deg A^{(i)} + \left(\frac{(\ell-1)e_\infty^{(i)}-2}{s}+1\right)\deg B^{(i)} - 2[E_i:\mathbb{F}_q(w)] \\
    &= \left(\frac{2}{s}+\frac{\ell-1}{s}-2\right)[E_i:\mathbb F_q(w)] + \left(1-\frac{2}{s}\right)(\deg A^{(i)}+\deg B^{(i)})\\
    &> \left(\frac{\ell+1}{s}-2\right)[E_i:\mathbb F_q(w)],
\end{align*}
where in the third line we have again used that $e_0^{(i)}\deg A^{(i)}=e_\infty^{(i)}\deg B^{(i)}=[E_i:\mathbb F_q(w)]$, and in the fourth line that $\deg A^{(i)}, \deg B^{(i)}>0$. Thus, if $\ell\geq 2s-1$, it follows that $\deg H^{(i)} - 2g(E_i) + 2>0$.\par

We conclude on the parameters of the code using the standard estimates from Theorem \ref{Theorem standard estimate AG codes},
\begin{align*}
K_i &= \deg H^{(i)} + 1 - g(E_i), \\
D_i &\geq N_i - \deg H^{(i)},\\
D_i^\perp &\geq \deg H^{(i)} - 2g(E_i).
\end{align*}
Explicitly, this yields
\begin{align*}
    K_i &\geq \left(\frac{\ell+1}{s}-1\right)[E_i:\mathbb F_q(w)]\\
    &= \left(\frac{\ell+1-s}{s(\ell-1)}\right)N_i,
\end{align*}
and
\begin{align}
    D_i &\geq \left(1 - \frac{3}{s(\ell-1)}\right)N_i,
\end{align}
where we have used that $[E_i:\mathbb F_q(w)] = [E_i:E_0]/(\ell-1)$. Finally, for later use, it will be important to lower bound the distance of the dual code, which is obtained as follows:
\begin{align*}
    D_i^\perp &\geq \frac{\ell+1}{s}[E_i:\mathbb F_q(w)] + \left(1 - \frac{2}{s}\right)(\deg A^{(i)}+\deg B^{(i)})\\
    &> \frac{\ell+1}{s}[E_i:\mathbb F_q(w)]\\
    &= \left(\frac{\ell+1}{s(\ell-1)}\right)N_i.
\end{align*}
\end{proof}

\subsection{Deferred proof on quantum code parameters}\label{Appendix deferred proofs on quantum code parameters}

\begin{proof}[Proof of Proposition \ref{proposition quantum code punctured matrix}]
In this parameter regime, we have $D_i^\perp\geq \deg H^{(i)}-2g(E_i)$. Therefore, using the same arguments as in the proof of Proposition \ref{proposition classical code parameters}, we have that
\begin{align*}
    D_i^\perp > \frac{\ell+1}{s(\ell-1)}[E_i:E_0].
\end{align*}
Now, we have different choices for the location of the place $\zeta$: either it belongs to $F=F_0$, or it belongs to $F=E_v$, where $v<i$. In the former case, we have 
\begin{align*}
  \frac{D_i^\perp}{k_i} &> \frac{\ell+1}{s(\ell-1)}[F_0:E_0]\\
    &= \frac{\ell(\ell+1)}{s},
\end{align*}
and, using that $2s-1\leq \ell$, the right-hand side is greater than one. 

Now, if $\zeta$ is a place of $E_v$ for $1\leq v<i$, we always have $k_i< [E_i:F_0]$, so that $D_i^\perp/k_i$ satisfies the same lower bound. For completeness, we would have  $D_i^\perp/k_i>1$ whenever $s\leq  \ell^v(\ell+1)p^{t(v)}$, which is satisfied for our choice of $\ell$ and $s$. 

Consequently, given any generator matrix of the code, any $k_i$ of its columns are linearly independent. This ensures that the generator matrix we start with can be put in the form given in the proposition, where we have the freedom to choose the identity block to be located at the columns of the set of places lying over the place $\zeta$. 
\end{proof}

\begin{lemma}\label{lemma generator matrix independent blocks}
We have $\text{rowspan}(G_0) \cap \text{rowspan}(G_1) = 0$ and $G_1$ is full rank.
\end{lemma}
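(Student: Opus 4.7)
The plan is to reduce both assertions to a single structural fact: the $K_i$ rows of the punctured matrix $G$ remain linearly independent, or equivalently, the puncturing map on $C^{(i)}_{a,b}$ that forgets the $k_i$ coordinates indexed by $\{Q : Q \mid \zeta\}$ is injective on the code. Granting this, the first $k_i$ rows of $G$ are by construction the rows of $G_1$, so linear independence immediately yields that $G_1$ has full rank $k_i$. For the trivial-intersection claim, any $v \in \text{rowspan}(G_0) \cap \text{rowspan}(G_1)$ can be written as $v = \sum_Q \lambda_Q g_Q = \sum_j \mu_j h_j$, where $g_Q$ and $h_j$ are the rows of $G_1$ and $G_0$ respectively; applying linear independence of the rows of $G$ to the relation $\sum_Q \lambda_Q g_Q - \sum_j \mu_j h_j = 0$ forces all coefficients to vanish, hence $v = 0$.

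The task therefore reduces to proving the injectivity of this puncturing map, equivalently, to showing that no nonzero codeword of $C^{(i)}_{a,b}$ is supported entirely on the $k_i$ coordinates indexed by $\{Q \mid \zeta\}$. Any such codeword has Hamming weight at most $k_i$, so it is enough to establish the strict inequality $D_i > k_i$. This is the main technical step and the only place where the parameter regime enters.

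To verify $D_i > k_i$, I would combine the lower bound $D_i \geq (1 - 3/(s(\ell-1)))N_i$ of Proposition \ref{proposition classical code parameters} with the identity $k_i = [E_i:F] = N_i/[F:E_0]$ and the inclusion $F \supseteq F_0$, which gives $[F:E_0] \geq \ell(\ell-1)$. Under the hypothesis $\ell \geq 2s-1$ of Proposition \ref{proposition quantum code punctured matrix}, a short arithmetic check—entirely parallel to the bound $D_i^\perp > k_i$ already established in the proof of that proposition—then produces the desired strict inequality, completing the argument.
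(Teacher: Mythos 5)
Your proof is correct in the relevant regime, but it takes a genuinely different route from the paper's. You reduce the statement to the distance inequality $D_i > k_i$, which guarantees that no nonzero codeword of $C^{(i)}_{a,b}$ is supported entirely on the columns indexed by $\{Q : Q \mid \zeta\}$, hence that puncturing those coordinates is injective on the code. The paper instead argues purely from the multiplication structure: using the $1$-multiplication property $\underline{u} \star C^{(i)}_{a,b} \subseteq (C^{(i)}_{a,b})^\perp$ together with the delta-function property $\tilde g_Q(Q') = \delta_{Q,Q'}$, it shows that the punctured bilinear form $\langle g \,;\, \underline{u} \star g' \rangle = \sum_{P \nmid \zeta} \underline{u}_P\, g_P\, g'_P = -\sum_{Q \mid \zeta} \underline{u}_Q\, \tilde g(Q)\, \tilde g'(Q)$ is nonzero exactly when $g = g'$ is a row of $G_1$, which immediately kills the $G_1$-coefficients in any linear relation among the rows of $G$, and the remaining coefficients vanish because $G_0$ is visibly full rank from the block form of $\tilde G$. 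The paper's argument is tighter because it exploits precisely the algebraic structure that the lemma is ultimately about and requires no additional parameter bookkeeping. Your argument works, but one should note that the arithmetic check $D_i > k_i$ is not in fact ``entirely parallel'' nor automatic under the stated hypothesis $\ell \geq 2s-1$ alone: for $s=1$ and $\ell=4$, the lower bound $D_i \geq (1 - 3/(s(\ell-1)))N_i$ degenerates to zero and gives nothing. The inequality does go through for $s \geq 2$, which is also exactly the regime where the $1$-multiplication property underlying the paper's proof holds (Lemma \ref{lemma multiplication property Stichtenoth}), so the effective scope of the two proofs is the same; but you should either restrict to $s \geq 2$ explicitly or carry out the arithmetic and note where it fails.
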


\begin{proof}[Proof of Lemma \ref{lemma generator matrix independent blocks}]
    This follows directly from the containment $\underline{u} \star C_{a,b}^{(i)}\subseteq C_{a,b}^{(i)}{^\perp}$. Given $g,g'\in \text{rows}(G)$, the inner product $\langle g;\underline{u} \star g'\rangle$ is non-zero if $g=g'\in  \text{rows}(G_1) $, and is equal to zero otherwise. 

Now, suppose that there exists a relation $\sum_{\text{rows}(G)} x_g g=0$. For any $g'\in \text{rows}(G_1)$, we have
    \[
        \langle \sum x_g g ; \underline{u} \star g' \rangle = \sum x_g \langle g ; \underline{u} \star g' \rangle = x_{g'}\langle g';\underline{u} \star g'\rangle = 0,
    \]
so that every coefficient $x_{g'}$, for $g'\in \text{rows}(G_1)$, vanishes. Other coefficients also vanish, since $G_0$ is full rank.
\end{proof}

\begin{proof}[Proof of Theorem \ref{theorem quantum code parameters}]
The dimension of the quantum code is given by the dimension of $C_1/C_0^\perp$, and because of Lemma \ref{lemma generator matrix independent blocks}, this is equal to the number of rows of $G_1$, which is $k_i=[E_i:F]$, as given in Equation \eqref{equation dimension quantum}. There are two cases to compute the rate of the code. Either $F=F_0$ or $F=E_v$ for some $1\leq v<i$. In the former case, we have a constant rate
\[
\frac{k_i}{n_i}\geq\frac{k_i}{N_i}=\frac{1}{\ell(\ell-1)}.
\]
In the latter case, we obtain similarly 
\[
\frac{k_i}{n_i}\geq\frac{1}{(\ell-1)\ell^vp^{t(v)}},
\]
which is a constant since $v$ is fixed for the whole family.\par

Regarding the distance, we have $d_i\geq \min(d(C_0),d(C_1))$. Puncturing one position of the generator matrix of any code has the effect of decreasing the distance by at most one or leaving it unchanged. It follows that 
\[
d(C_1)\geq D_i-k_i,
\]
where $D_i$ is the distance of $C_{a,b}^{(i)}$.\par

To determine the distance of the code $C_0$, notice first that $G$ is a generator matrix for the AG code $C_\mathcal{L}(R'^{(i)},H^{(i)})$, where $R'^{(i)}= \sum_{P\nmid \zeta} P$, namely we consider the evaluations at the places that do not lie over $\zeta$. Moreover, the block $G_0$ describes evaluation of functions of the Riemann-Roch space $\mathcal{L}(H^{(i)})$ with the additional constraint that they vanish at every place $Q$ lying above $\zeta$. We can deduce that 
\[
C_0^\perp = C_\mathcal{L}(R'^{(i)},H'^{(i)}),
\]
where we define the divisor $H'^{(i)} \coloneqq H^{(i)} - \sum_{Q|\zeta} Q$. 

To compute a lower bound on the distance of $C_0$ and its dual, we must check that the conditions of Theorem \ref{Theorem standard estimate AG codes} are fulfilled. First, recalling that $\text{deg }H^{(i)}<N_i$, we have 
\[
\text{deg }H'^{(i)}<N_i-k_i,
\]
where the r.h.s is the number of places in the support of $R'^{(i)}$. Next, we have 
\begin{align*}
\text{deg }H'^{(i)}-2g(E_i)+2& \geq  \text{deg }H^{(i)}-k_i -2g(E_i)+2\\
& > \left( \frac{\ell+1}{s}-2\right) [E_i:\mathbb F_q(w)]-k_i.
\end{align*}
Once again, we have two possibilities for $k_i$, depending on whether $F=F_0$ or $F=E_v$ for some $v<i$. In the first case, $k_i=[E_i:F_0]$, and we obtain  
\begin{align*}
\text{deg }H'^{(i)}-2g(E_i)+2& >  \left( \frac{\ell+1}{s}-2\right) \frac{[E_i:E_0]}{\ell-1}-\frac{[E_i:E_0]}{\ell(\ell-1)}\\
& =\left( \frac{\ell+1-2s}{s(\ell-1)}-\frac{1}{\ell(\ell-1)}\right) [E_i:E_0],
\end{align*}
which is greater than zero using $\ell\geq 2s$. If $F=E_v$, we have that $k_i<[E_i:F_0]$, so that the l.h.s is strictly greater than zero. In both cases, we can use Proposition \ref{proposition classical code parameters} and obtain that $C_0$ has distance lower bounded by $\text{deg }H'^{(i)} -2g(E_i) = \text{deg }H^{(i)}-2g(E_i)-k_i = D_i^\perp-k_i$, where $D_i^\perp$ is the distance of $(C_{a,b}^{(i)})^\perp$.\par
Overall, we obtain that the distance of the quantum code is 
\[
d_i\geq \min(D_i,D_i^\perp)-k_i.
\] Moreover, under the assumption $\ell\geq 2s \geq 6$, we have that the lower bound for $D_i$ is greater than that of $D_i^\perp$, as can be seen in Proposition \ref{proposition classical code parameters}, so that $d_i\geq \text{deg }H^{(i)}-2g(E_i)-k_i$. To assess the asymptotic performance of the quantum code, we now compute a lower bound on the relative distance using the lower bound on the distance given in Proposition \ref{proposition classical code parameters}:
\begin{align*}
\frac{d_i}{n_i}\geq \frac{d_i}{N_i}\geq
& \left( \frac{\ell+1}{s(\ell-1)}\right)-\frac{k_i}{N_i} ,
\end{align*}
where we shall replace the value of $k_i$ depending on the case $F=F_0$ or $F=E_v$. Under the condition on $\ell$ and $s$, this is again constant, as for the code rate, which concludes the claim that the quantum code is asymptotically good.
\end{proof}

\subsection{Proof of Theorem \ref{theorem: main circuit} }\label{Appendix proof of main theorem}

\begin{proof}[Proof of Theorem \ref{theorem: main circuit}]
To prove the theorem, it is enough to study the action of the sequence of gates of the r.h.s of Equation \eqref{Equation physical circuit} on a product state of the form $\ket{\psi^0} \ket{\psi^1} \dots \ket{\psi^{m-1}}$, where $ \ket{\psi^j}\in \mathcal{Q}^{(i)}$ belongs to the $j$-th code block of $(\mathcal{Q}^{(i)})^{\otimes m}$, where we recall that $\mathcal{Q}^{(i)}=\text{CSS}(C_0,C_1)$, as defined in Equation~\eqref{equation quantum code def}. In turn, by linearity this is enough to check the behavior on a tensor product of computational basis states $\overline{\ket{x^0}}\:\overline{\ket{x^1}} \dots \overline{\ket{x^{m-1}}}$, where $x^j \in \mathbb{F}_q^{k_i}$. Recall that, given $x^j=(x^j_Q)_{Q\mid \zeta} \in \mathbb{F}_q^{k_i}$, the associated logical state can be expressed in the form $\overline{\ket{x^j}}\coloneqq\sum_{g \in \text{rowspan}(G_0)}\ket{\sum_{Q|\zeta} x^j_{Q}g_Q+g}$.  Therefore, by linearity again, it is enough to check the action of the circuit on a tensor product $$\ket{\mathsf F}\coloneqq\ket{f^0}\ket{f^1}\dots \ket{f^{m-1}},$$
where $\ket{f^j}$ is any summand $\ket{\sum_{Q|\zeta} x^j_{Q}g_Q+g}$, such that $g\in \text{rowspan}(G_0)=C_0^\perp$. It follows that $f^j$ is any codeword of $C_1$.\par
We may write any codeword $f^j\in C_1$ as a vector $f^j=(f^j_P)_{P\nmid\zeta}\in \mathbb F_q^{n_i}$, since the physical qubits of $\mathcal{Q}^{(i)}$ are in one to one correspondence with the places $P\nmid \zeta$. We may also write such a vector as a linear combination of the rows of $G$, 
\begin{align*}
f^{j} &\coloneqq \sum_{g\in \text{rows}(G)}x^j_g g\\
&=\sum_{Q \mid \zeta}x^j_Q g_Q+\sum_{g\in \text{rows}(G_0)}x^j_g g .
\end{align*}
where in the second equality, we have used that the rows of $G_1$ are in one-to-one correspondence with the logical qubits, them being also in one-to-one correspondence with the rational places $Q\mid \zeta$. To this vector corresponds a unique linear combination of rows of $\tilde{G}$, 
\begin{align*}
    \tilde{f}^j=& \sum_{g\in \text{rows}(\tilde G)} x^j_{\tilde g}\tilde{g}\\
    =&\sum_{Q \mid \zeta}x^j_Q \tilde g_Q+\sum_{\tilde g\in \text{rows}(G_0)}x^j_{\tilde g} \tilde g.
\end{align*}
We may identify each $\tilde f^j$ with a unique function, that we also denote $\tilde f^j$ , in the Riemann-Roch space $\mathcal{L}(H^{(i)})$. For any place $P\nmid \zeta$ we simply have that $$f^j_P= \tilde f^j(P)$$ and given any place $Q'\mid \zeta$ we have
\begin{align*}
    \tilde f^j(Q')=\sum_{Q\mid \zeta} x^j_{Q}\tilde g_Q(Q')=x^j_{Q'}.
\end{align*}
Now, applying the circuit in the r.h.s of Equation \eqref{Equation physical circuit} to $\ket{\mathsf F}$ yields
    \begin{align*}
\prod_{P\nmid \zeta}& \mathsf{C}^{m-1}\mathsf{Z}^{-\underline{u}_{P}   M (P)}[P,\sigma_1(P),\sigma_2(P),\dots,\sigma_{m-1}(P)]\:  \ket{\mathsf F}\\
        =&   \prod_{P\nmid \zeta}\exp\left( - \frac{2i\pi}{p}  \text{tr}\left( \underline{u}_{P}   M (P)   f^0_P    f^1_{\sigma_1(P)} \dots f^{m-1}_{\sigma_{m-1}(P)} \right) \right) \ket{\mathsf F} \\
        =&   \prod_{P\nmid \zeta}\exp\left(-  \frac{2i\pi}{p}  \text{tr}\left( \underline{u}_{P}   M (P)   \tilde{f}^0(P)    \tilde{f}^1(\sigma_1(P)) \dots \tilde{f}^{m-1}(\sigma_{m-1}(P)) \right) \right) \ket{\mathsf F} \\
        =& \exp \left(  -\frac{2i\pi}{p}  \text{tr} \left( \;  \sum_{P\nmid \zeta } \underline{u}_{P}   M (P)    \tilde{f}^0(P)    \tilde{f}^1(\sigma_1(P)) \dots \tilde{f}^{m-1}(\sigma_{m-1}(P)) \right) \right)  \ket{\mathsf F},
    \end{align*}
where we recall that $\mathrm{tr}\coloneqq\mathrm{tr}_{\mathbb F_q/\mathbb F_p}$ is the trace map from $\mathbb F_q$ to the prime field $\mathbb F_p$. Since the divisors $A^{(i)}$ and $B^{(i)}$ are invariant under any $\text{Gal}(E_i/F_0)$, the divisor $H^{(i)}=aA^{(i)}+bB^{(i)}$ is also invariant under any $\text{Gal}(E_i/F)\subseteq \text{Gal}(E_i/F_0)$. Moreover, note that $\tilde{f}(\sigma(P))$, which is the function $\tilde{f}$ evaluated at the place $\sigma(P)$, is the same thing as the function $\sigma^{-1}(\tilde{f})$ evaluated at the place $P$, which still belongs to the Riemann-Roch space $ \mathcal{L}(H^{(i)})$. Therefore, given any $\sigma \in\text{Gal}(E_i/F)$ and $\tilde f \in \mathcal{L}(H^{(i)})$, we have that $\sigma^{-1}(\tilde f) \in \mathcal{L}(H^{(i)})$. It follows that $\sigma_j^{-1}(\tilde{f}^j) \in \mathcal{L}(H^{(i)})$, for all $1\leq j\leq m-1$.\par
Crucially, recall that $M  \in \mathcal{L}(H^{(i)})$, so that Corollary \ref{Corollary multiplication property} applies to the $m$ functions $\sigma_j^{-1}(\tilde{f}^j)$ for $0\leq j\leq m-1$, with the function $\tilde f^m$ taken to be $M $. We obtain, for the sum inside the exponential,
    \begin{align*}
   \sum_{P\nmid \zeta }& \underline{u}_{P}   M (P)    \tilde{f}^0(P)    \tilde{f}^1(\sigma_1(P)) \dots \tilde{f}^{m-1}(\sigma_{m-1}(P)) \\
        =&\sum_{P\nmid \zeta } \underline{u}_{P}   M (P)     \tilde{f}^0(P)    \left(\sigma_1^{-1}(\tilde{f}^1)\right)(P) \dots \left(\sigma_{m-1}^{-1}(\tilde{f}^{m-1}) \right)(P) \\
        =&\sum_{Q\mid \zeta }\underline{u}_{Q}\;M (Q)    \tilde{f}^0(Q)    \left(\sigma_1^{-1}(\tilde{f}^1)\right)(Q) \dots \left(\sigma_{m-1}^{-1}(\tilde{f}^{m-1})\right)(Q) \\
        =&- \sum_{Q\in S}\gamma_Q   \tilde{f}^0(Q)   \tilde{f}^1(\sigma_1(Q))\dots\tilde{f}^{m-1}(\sigma_{m-1}(Q))\\
        =& -\sum_{Q\in S}\gamma_Q   x^0_{Q}   x^1_{\sigma_1(Q)}\dots x^{m-1}_{\sigma_{m-1}(Q)},
    \end{align*}
where in the fourth line, we applied Corollary \ref{Corollary multiplication property}, and in the last line have used that $$M (Q)=\sum_{Q'\in S}\gamma_{Q'}   (\underline{u}_{Q'})^{-1}  \tilde g_{Q'}(Q)=\sum_{Q'\in S}\gamma_{Q'}   (\underline{u}_{Q'})^{-1}  \delta_{Q',Q}.$$
It follows directly that
 \begin{align}\label{Equation effect of circuit on arbitrary vectors}
       \prod_{P\nmid \zeta}\mathsf{C}^{m-1}&\mathsf{Z}^{-\underline{u}_{P}   M (P)}[P,\sigma_1(P),\sigma_2(P),\dots,\sigma_{m-1}(P)] \:\ket{\mathsf F}\\
    &   =\prod_{Q\in S}\exp \left(  \frac{2i\pi}{p} \text{tr}\left( \gamma_Q   x^0_{Q}   x^1_{\sigma_1(Q)}\dots x^{m-1}_{\sigma_{m-1}(Q)}\right)\right)\ket{\mathsf F}\\
\end{align}
This holds for any functions $f^0,f^1,\dots,f^{m-1}$, and hence it holds for any product $\overline{\ket{\boldsymbol{x}}}\coloneqq\overline{\ket{x^0}}\:\overline{\ket{x^1}}\dots\overline{\ket{x^{m-1}}}$ of logical basis states and we obtain

\begin{align}
       \prod_{P\nmid \zeta}\mathsf{C}^{m-1}&\mathsf{Z}^{-\underline{u}_{P}   M (P)}[P,\sigma_1(P),\sigma_2(P),\dots,\sigma_{m-1}(P)] \overline{\ket{\boldsymbol{x}}}\\
    &   =\prod_{Q\in S}\exp \left(  \frac{2i\pi}{p} \text{tr}\left( \gamma_Q   x^0_{Q}   x^1_{\sigma_1(Q)}\dots x^{m-1}_{\sigma_{m-1}(Q)}\right)\right)\overline{\ket{\boldsymbol{x}}}\\
    &= \prod_{Q\in S}\overline{\mathsf{C}^{m-1}\mathsf{Z}^{\gamma_Q}[Q,\sigma_1(Q),\sigma_2(Q),\dots,\sigma_{m-1}(Q)]}\:\overline{\ket{\boldsymbol{x}}}.
\end{align}
Since this holds for products of logical basis states, it also holds for superpositions thereof.
\end{proof}

\bibliographystyle{alpha}
\newcommand{\etalchar}[1]{$^{#1}$}

\end{document}